\newcommand{\pExp}[2]{\ensuremath{\mathtt{pExp}^\pi_{\mathcal{F}}(#1 #2)}}
\newcommand{\PExp}[1]{\ensuremath{\widehat{\mathtt{pExp}}^\pi_{\mathcal{F}}(#1)}}
\newcommand{\Exp}[1]{\ensuremath{\mathtt{Exp}^{\pi}_{\mathcal{F}}(#1)}}
\newcommand{\Fix}[1]{\ensuremath{\mathtt{Fix}_{#1}}}
\newcommand{\grad}[0]{\nabla \!}
\DeclareMathOperator*{\st}{subject~to}
\DeclareMathOperator*{\argmin}{arg\,min}
\newtheorem{definition}{Definition}
\newtheorem{proof}{Proof}
\newtheorem{remark}{Remark}
\newtheorem{theorem}{Theorem}
\newtheorem{corollary}{Corollary}
\begin{document}

\begin{frontmatter}
\runtitle{A complete characterization of linearly convergent algorithms}  

\title{Learning to Optimize with Guarantees: A Complete Characterization of Linearly Convergent Algorithms\thanksref{footnoteinfo}} 

\thanks[footnoteinfo]{This paper was not presented at any IFAC 
meeting. Corresponding author: A.~Martin. This work was supported by Digital Futures and the Swiss National Science Foundation (SNSF) through the Ambizione grant PZ00P2\_208951.}

\author[a]{Andrea Martin}\ead{andrmar@kth.se},
\author[b]{Ian R. Manchester}\ead{ian.manchester@sydney.edu.au},
\author[c]{Luca Furieri}\ead{luca.furieri@eng.ox.ac.uk}

\address[a]{School of Electrical Engineering and Computer Science, and Digital Futures, KTH Royal Institute of Technology, Sweden}
\address[b]{Australian Centre for Robotics and School of Aerospace, Mechanical and Mechatronic Engineering, The University of Sydney, Australia}
\address[c]{Department of Engineering Sciences, University of Oxford, United Kingdom}

\begin{keyword} %
Learning to optimize; parametric optimization; fixed-point iterations; neural network control; numerical algorithms; application of nonlinear analysis and design.
\end{keyword}                             

\begin{abstract} %
The design of many classical optimization algorithms is driven by the certification of linear convergence rates over classes of optimization problems. In this paper, we consider the problem of improving the average-case performance of an algorithm over a specific distribution of problem instances. While this task can be tackled by embedding trainable components into the algorithm updates, a key challenge is to preserve worst-case guarantees across the entire problem class. For classes of composite optimization problems, we show that all linearly convergent algorithms can be parametrized in terms of a baseline linearly convergent algorithm, and a set of trainable, exponentially-decaying modifications to its update rule; crucially, this parametrization excludes all—and only—the algorithms that do not converge linearly. Our results apply to improving the average-case performance of classical algorithms such as gradient descent for nonconvex, gradient-dominated functions; Nesterov’s accelerated method for smooth, strongly convex functions; and projected gradient methods for optimization over polyhedral feasible sets. We illustrate how our characterization can be used for learning to optimize with linear convergence and feasibility guarantees. Numerical results showcase benefits over classical optimizers when solving ill-conditioned systems of linear equations and running a model predictive control scheme on a linear dynamical system.
\end{abstract}

\end{frontmatter}

\section{Introduction}
Guarantees of fast convergence are crucial whenever optimization must be executed under tight computational budgets, as in large-scale machine learning (ML) or real-time model predictive control (MPC). Worst-case linear convergence guarantees have been developed for iterative optimization algorithms over several classes of objective functions, whose structure—e.g., strong convexity, smoothness, or gradient dominance—can be exploited by first-order schemes such as standard gradient descent and Nesterov’s accelerated method \cite{nesterov1983method}. A growing body of work leverages the analogy between worst-case convergence rates and robust-control techniques such as integral quadratic constraints (IQCs), leading to characterizations of accelerated algorithms with provably optimal rates across families of convex functions \cite{lessard2016analysis,scherer2021convex,van2017fastest}.

Worst-case rate guarantees establish a baseline performance in terms of the number of iterations required to achieve a certain level of precision. However, the performance of an algorithm in a specific application does not depend solely on its worst-case convergence rate. First, there exist fundamental trade-offs between the speed of convergence and the robustness of algorithmic behavior; see, for instance, the speed/covariance trade-off for accelerated methods in strongly convex optimization analyzed in \cite{mohammadi2020robustness}. This raises the question of how to appropriately define algorithm performance. A second challenge is that scenarios encountered in applications rarely span the entirety of the space of problem instances for which the worst-case guarantee is tight, resulting in overly conservative average-case performance. This introduces another trade-off: how to tailor an algorithm performance to specific instances without compromising the original uniform guarantees. A prime example of such a situation is MPC \cite{mayne2000nonlinear}, where the optimization problems to be solved online often share the same objective and system dynamics constraints, and only differ in the initial state. In such cases, a solver tailored to this sub-family of problems could converge in significantly fewer iterations than a general-purpose algorithm tuned for optimal worst-case convergence rates.

The learning to optimize (L2O) literature addresses the challenge of adopting user-defined performance metrics beyond mere convergence rates and designs algorithms that are tailored to such metrics using ML. For instance, \cite{andrychowicz2016learning} proposes an algorithm performance metric that balances convergence speed with solution precision, and accordingly designs neural network (NN) update rules. However, general-purpose NN update rules come with no guarantees. Convergence with learned updates has been addressed through possibly conservative safeguarding mechanisms \cite{heaton2023safeguarded}, or by limiting the use of ML components to learning initializations \cite{sambharya2024learning} or to hyperparameter tuning \cite{ichnowski2021accelerating} of classical algorithms.
These approaches demonstrate performance exceeding that of state-of-the-art classical algorithms upon training and inherit their convergence guarantees—yet the reliance on classical update rules inherently limits the expressivity of these learning-based design methods.%

Beyond the optimal tuning of classical algorithms, another line of research seeks to use ML to design entirely new convergent update rules, aiming to discover application-specific shortcuts unknown to classical update rules. This has been achieved by taking simple gradient descent as a baseline and enhancing it through learned optimal deviations from such gradient-based updates. The work \cite{martin2024learning} characterizes the class of all and only those deviation functions that ensure convergence to stationary points in nonconvex, unconstrained smooth optimization, enabling learned optimization for user-defined performance metrics and outperforming finely tuned Adam \cite{kingma2014adam} in NN training. The work \cite{banert2024accelerated} uses deep learning to train deviations from gradient descent and saturates these updates with the norm of measured gradients, ensuring convergence for composite convex optimization. The numerical studies of \cite{martin2024learning,banert2024accelerated} empirically demonstrate that convergence rates superior to those of classical algorithms can be achieved through training on gradient descent perturbations. %

Our main goal is to address a question that has remained open in the literature on L2O. Given any algorithm with guaranteed linear convergence over a class of optimization problems, how can we improve its average-case performance over a subclass of problems of interest without sacrificing its guarantees over the entire class? A theoretical study of these trade-offs is important towards making learned optimization a standard and reliable component of linearly convergent algorithm design.

The goal of characterizing all convergent optimization algorithms for a class of problems is analogous to the canonical control-theoretic problem of parameterizing all stabilizing controllers for a given system, initially studied for linear systems \cite{youla1976modern,anderson1998youla} and more recently for nonlinear and in particular neural-network systems (see, e.g. \cite{van2000l2,galimberti2025parametrizations,manchester2026neural}). Structurally, our approach is analogous in that it augments an existing convergent algorithm with a decaying perturbation, whereas the Youla parameterization augments an existing stabilizing controller with a stable perturbation. Connection can also be drawn between our work and the recently-studied input-to-state stability properties of gradient algorithms applied to optimization problems satisfying the Polyak–Łojasiewicz (PL) condition \cite{cui2025perturbed}. In this paper we study a wider class of optimization algorithms and provide a characterization of perturbations achieving precise bounds on convergence rates.

\emph{Contributions:} Given any existing optimization algorithm that achieves linear convergence to a set of fixed points at a specified rate—henceforth the \emph{baseline algorithm}—our main contributions are as follows. First, we characterize conditions on the baseline algorithm under which adding exponentially decaying perturbations preserves the same linear convergence rate, up to a higher-order polynomial term. These conditions identify fundamental trade-offs between the frequency of perturbations and their worst-case impact on the linear convergence rate. Second, we establish a completeness result for linearly convergent optimization: every update rule that converges linearly at a given rate can be written as the sum of the baseline algorithm and a suitably designed exponentially decaying augmentation of its update rule that preserves linear convergence. Finally, we instantiate our results for augmenting state-of-the-art linearly convergent algorithms, notably 1) Nesterov's accelerated gradient (NAG) \cite{nesterov1983method} method for strongly convex and smooth optimization, 2) gradient descent for classes of nonconvex Polyak–Łojasiewicz (PL) functions, and 3) projected gradient descent for convex optimization with polyhedral feasible sets.
Numerical examples showcase the potential of learned linearly convergent optimization in augmenting the performance of classical algorithms under tight iteration budgets. 

\emph{Notation:} The set of all sequences $\mathbf{x} = (x_0,x_1,x_2,\ldots)$ where $x_t \in \mathbb{R}^n$ for all $t\in \mathbb{N}$ is denoted as $\ell^n$.  %
For a function $g:\mathbb{R}^n \rightarrow \mathbb{R}^m$, we write $g\left(\mathbf{x}\right) = (g(x_0),g(x_1),\ldots) \in \ell^m$. For $m \in \mathbb{N}$, we use $\mathcal{P}_m(x)$ to denote the set of positive and monotonically non-decreasing polynomials of degree at most $m$ in the variable $x$ with $p(1) \geq 1$. We define the set of fixed points of an operator $\pi$, assumed non-empty, as $\Fix{\pi}$. For $m \in \mathbb{N}$ and $\gamma \in (0,1)$, we denote by $\ell_{exp}(m, \gamma)$ the class of signals $\mathbf{x}$ for which there exists a polynomial $p_m(t) \in \mathcal{P}_m(t)$ such that $|x_t| \leq p_m(t) \gamma^t$, where $|\cdot|$ denotes any vector norm.  We write $\mathbb{I}_S$ for the indicator function of a set $S$ ($0$ if $x\in S$, $+\infty$ otherwise). The distance from $x$ to a set $S$ is defined as $\mathrm{dist}(x,S)=\inf_{y \in S}|x-y|$. For $u,v\in\mathbb{R}^n$, $u\le v$ indicates the element‐wise inequality.%

\section{Problem formulation}
We consider composite optimization problems of the form
\begin{align}
    &~\min_{x \in \mathbb{R}^d} \quad f(x) + g(x)\,,\label{eq:composite_opt_prob}
\end{align}
where $x \in \mathbb{R}^d$ is the decision variable, $f : \mathbb{R}^d \to \mathbb{R}$ is \textcolor{black}{proper and }$\beta$-smooth, and $g : \mathbb{R}^d \to \mathbb{R} \cup \{+ \infty\}$ is convex, proper, and lower semi-continuous, but potentially nonsmooth. We let $F(x) = f(x)+g(x)$ for brevity, and we assume that the set of optimizers $\mathcal{X}^\star=\argmin_{x \in \mathbb{R}^d} ~ F(x)$ is non-empty. In particular, we note that \eqref{eq:composite_opt_prob} subsumes constrained optimization problems of the form
\begin{subequations}
\label{eq:constrained_opt_prob}
\begin{align}
    &~\min_{x \in \mathbb{R}^d} \quad f_0(x) \label{eq:constrained_opt_prob_objective}\\
    &\st \quad f_i(x) \leq 0\,, \quad \forall i \in [1,M]\,,\label{eq:constraints_of_opt_problem}
\end{align}
\end{subequations}
where $f_0 : \mathbb{R}^d \to \mathbb{R}$ is $\beta$-smooth,  and each function $f_i : \mathbb{R}^d \to \mathbb{R}$ with $i \in [1,M]$ defines a non-empty convex feasibility set $\mathcal{X}_i \subseteq \mathbb{R}^d$. In fact, one can rewrite \eqref{eq:constrained_opt_prob} as an instance of \eqref{eq:composite_opt_prob} letting $f(x)=f_0(x)$ and $g(x) = \max_{i \in [1,M]}~\mathbb{I}_{\mathcal{X}_i}(x)$.

A standard method to solve problem \eqref{eq:composite_opt_prob} is to analytically construct iterations of the form:
\begin{equation}
    \label{eq:optimization_algorithms_fixed_point}
    \xi_{t+1} = \pi(F, \xi_t)\,,  ~ x_t = \phi(F, \xi_t)\,, ~ \xi_0 \in \mathbb{R}^n\,, ~ t \in \mathbb{N}\,,
\end{equation}
where $\xi_t \in \mathbb{R}^n$ is the state variable, the decision $x_t \in \mathbb{R}^d$ is the output variable, and the operator $\pi$ is designed so that its set of fixed points $\Fix{\pi}$—the set of points $\xi^\star$ such that $\pi(F, \xi^\star)=\xi^\star$—is related to $\mathcal{X}^\star$ through $\phi$; that is, a point $x^\star \in \mathcal{X}^\star$ can be reconstructed from a point $\xi^\star \in \Fix{\pi}$ as per $x^\star = \phi(F, \xi^\star)$.

A key metric for the performance of algorithms \eqref{eq:optimization_algorithms_fixed_point} when applied to a class of problems $F \in \mathcal{F}$ is how fast they converge to $\Fix{\pi}$. Classical optimization algorithms often come with convergence guarantees that hold for the worst-case instance of $F\in \mathcal{F}$. However, optimal control methods such as MPC require efficiently finding solutions to the instances of \eqref{eq:composite_opt_prob} that are encountered during deployment,  
where the objective $F$ is drawn from a specific distribution $\mathcal{D}_{\mathcal{F}}$ over the class $\mathcal{F}$. Motivated as such, in this work we investigate the following question.

\emph{Given a set of problem instances $F \sim \mathcal{D}_{\mathcal{F}}$ and a baseline algorithm $\pi$ to solve \eqref{eq:composite_opt_prob}, how can we improve its average-case performance over $\mathcal{D}_{\mathcal{F}}$, while retaining worst-case convergence guarantees over the entire class $\mathcal{F}$?}

\smallskip

In particular, this paper characterizes algorithms $\bm{\nu}$ that achieve \emph{linear convergence} to $\Fix{\pi}$ for classes of functions $\mathcal{F}$.  We will showcase how to leverage such characterization for learning-based algorithm design in Section~\ref{sec:numerical}. 

\begin{definition}
\label{def:linearly_convergent}
    An algorithm $\xi_{t+1} = \nu_t(F,\xi_{t:0})$ is said to be \textit{linearly convergent} to $\Fix{\pi}$ for $\mathcal{F}$ with rate $\gamma \in (0,1)$ if there exists a polynomial $p(t) \in \mathcal{P}_m(t)$ such that for any $F \in \mathcal{F}$
    \begin{equation}
    \label{eq:exponential_convergence}
        \operatorname{dist}(\xi_{t}, \Fix{\pi}) \leq  p(t) \gamma^t \operatorname{dist}(\xi_0, \Fix{\pi})\,, ~ \forall \xi_0 \in \mathbb{R}^n\,,
    \end{equation}
    at all times, where $\operatorname{dist}(\cdot, \cdot)$ is a distance function.  In this case, we write that $\bm{\nu} \in \pExp{m, \gamma}$. When the focus is not on the polynomial order and only on the exponential convergence rate, we write $\bm{\nu} \in \PExp{\gamma}$. Additionally, if \eqref{eq:exponential_convergence} holds with a  polynomial such that $p(1) = 1$, then we say that $\bm{\nu}$ is monotonically linearly convergent to $\Fix{\pi}$ since the distance to the set of fixed points shrinks monotonically with the iterations, and we write $\bm{\nu} \in \Exp{\gamma}$.\footnote{Note that if \eqref{eq:exponential_convergence} holds with a  polynomial $p(t)$ such that $p(1) = 1$, then $\operatorname{dist}(\xi_{t}, \Fix{\pi}) \leq  \gamma \operatorname{dist}(\xi_{t-1}, \Fix{\pi}) \leq  \dots \leq \gamma^t \operatorname{dist}(\xi_{0}, \Fix{\pi})$. Hence, \eqref{eq:exponential_convergence} holds with the constant polynomial $p(t) = 1$ and $\operatorname{dist}(\xi_{t}, \Fix{\pi})$ shrinks monotonically.}
\end{definition}

To enhance the average-case performance of a baseline algorithm $\pi$ on problem instances $F \sim \mathcal{D}_\mathcal{F}$, we aim to design augmentations $v_t \in \mathbb{R}^n$ to its update rule that do not jeopardize its convergence guarantees. Specifically, we will consider augmented update rules defined by the iterations:
\begin{align}
    \label{eq:optimization_algorithms_fixed_point_plus_v}
    \xi_{t+1} &= \nu_t(F, \xi_{t:0})= \pi(F, \xi_t) + v_t(F, \xi_{t:0}), ~ \xi_0 \in \mathbb{R}^n,
\end{align}
and we will establish conditions ensuring that $\bm{\nu}$ is linearly convergent as per \eqref{eq:exponential_convergence}. 

\section{A complete characterization of linearly convergent optimization algorithms}
\label{sec:main_results}
In this section, we establish our main results on how introducing an augmentation term $v_t$ affects the worst-case linear convergence guarantees of a given baseline optimization algorithm $\pi$. We first abstract away from the specific form of $\pi$ and the class of functions $\mathcal{F}$ it is designed to optimize; we only assume that $\pi$ is a linearly convergent fixed-point algorithm as per Definition~\ref{def:linearly_convergent}. In Section~\ref{subsec:application}, we present corollaries that reveal several classes of problems \eqref{eq:composite_opt_prob} and corresponding baseline algorithms $\pi$ that are compatible with our framework. 

Consider a baseline algorithm $\pi \in \PExp{\gamma}$ achieving linear convergence as per Definition~\ref{def:linearly_convergent}. The property \eqref{eq:exponential_convergence} implies that the signal $\operatorname{dist}(\xi_t,\Fix{\pi})$ decays exponentially up to a polynomial factor $p(t)$, for any initial condition $\xi_0 \in \mathbb{R}^n$ and any objective $F \in \mathcal{F}$. We now characterize to what extent injecting exponentially decaying signals $v_t$ in the iterates of \eqref{eq:optimization_algorithms_fixed_point_plus_v} can deteriorate the convergence guarantee of $\pi$.

\begin{theorem}
    \label{th:sufficiency_polyexp_general}
    Consider the recursion \eqref{eq:optimization_algorithms_fixed_point_plus_v} and assume that $\pi\in \pExp{m,\gamma}$. Choose any $N \in \mathbb{N}$ such that 
    \begin{equation*}
        \rho=p(N) \gamma^N < 1\,,
    \end{equation*}
    and any auxiliary signal $\mathbf{w}\in \ell_{exp}(m, \rho)$. For every $t \in \mathbb{N}$, construct the augmentation signal $v_t$ in \eqref{eq:optimization_algorithms_fixed_point_plus_v} as follows:
    \begin{equation}
        \label{eq:v_construction_zeros}
        v_t= \begin{cases}w_{\frac{t+1}{N} - 1} \quad \text{ if }t+1 \operatorname{mod} N = 0\,,\\
        0 \qquad\qquad \text{otherwise}\,.
        \end{cases}
    \end{equation}
Then, the iterates of \eqref{eq:optimization_algorithms_fixed_point_plus_v} satisfy:
    \begin{equation}
    \label{eq:new_rate_general}
        \operatorname{dist}(\bm{\xi}, \Fix{\pi}) \in \ell_{exp}(m+1, \sqrt[N]{p(N)}\gamma)\,.
    \end{equation}
\end{theorem}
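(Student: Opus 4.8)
The idea is to partition the time axis into consecutive blocks of length $N$ and treat the augmentation $v_t$, which is only nonzero at the end of each block, as a "reset" that restarts the baseline dynamics. Between resets, the pure baseline bound \eqref{eq:exponential_convergence} applies: if at the start of a block the distance to $\Fix{\pi}$ is $D$, then after $N$ steps of running $\pi$ alone, the distance is at most $p(N)\gamma^N D = \rho D$, i.e. the block contraction factor is $\rho < 1$. The only subtlety is that at the end of the block we also add $w_{k}$, which by assumption satisfies $|w_k| \le \tilde p_m(k)\rho^k$ for some $\tilde p_m \in \mathcal{P}_m$. So if I let $D_k := \operatorname{dist}(\xi_{kN}, \Fix{\pi})$ denote the distance at the $k$-th block boundary, the triangle inequality gives a recursion of the form
\begin{equation}
    \label{eq:block_recursion_plan}
    D_{k+1} \le \rho D_k + \tilde p_m(k)\rho^k\,.
\end{equation}

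Next I would solve this linear recursion explicitly. Unrolling \eqref{eq:block_recursion_plan} yields $D_k \le \rho^k D_0 + \sum_{j=0}^{k-1} \rho^{k-1-j}\tilde p_m(j)\rho^j = \rho^k D_0 + \rho^{k-1}\sum_{j=0}^{k-1}\tilde p_m(j)$. Since $\sum_{j=0}^{k-1}\tilde p_m(j)$ is bounded by a polynomial of degree $m+1$ in $k$, I obtain $D_k \le q_{m+1}(k)\rho^k$ for some $q_{m+1} \in \mathcal{P}_{m+1}$ (absorbing $D_0$ and the $\rho^{-1}$ factor into constants; here I also use $D_0 = \operatorname{dist}(\xi_0,\Fix{\pi})$, treating it as a fixed constant so the polynomial-in-$k$ claim is unaffected, or carrying it as a multiplicative constant if the bound \eqref{eq:new_rate_general} is to be read relative to $\operatorname{dist}(\xi_0,\Fix{\pi})$ as in Definition~\ref{def:linearly_convergent}). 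This establishes the desired decay at the block boundaries with base $\rho = p(N)\gamma^N$ and polynomial degree $m+1$.

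Finally I would interpolate to all times $t$, not just multiples of $N$. Write $t = kN + r$ with $0 \le r < N$. Within the block, running the baseline for $r$ more steps gives $\operatorname{dist}(\xi_t,\Fix{\pi}) \le p(r)\gamma^r D_k \le p(N) D_k \le p(N)\, q_{m+1}(k)\rho^k$. It remains to convert the base $\rho^k = (p(N)\gamma^N)^{k}$ into the claimed base $(\sqrt[N]{p(N)}\gamma)^t$: since $k = (t-r)/N \ge (t-N)/N = t/N - 1$, we have $\rho^k \le \rho^{t/N - 1} = \rho^{-1}\big(p(N)^{1/N}\gamma\big)^{t}$, and the constant $\rho^{-1}$ together with $p(N)$ gets absorbed into the polynomial prefactor. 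Likewise $k \le t/N$ means $q_{m+1}(k) \le q_{m+1}(t/N)$, still a degree-$(m+1)$ polynomial in $t$. Collecting terms gives a bound of the form $\hat p_{m+1}(t)\,\big(\sqrt[N]{p(N)}\gamma\big)^{t}$ with $\hat p_{m+1} \in \mathcal{P}_{m+1}$, which is exactly the statement $\operatorname{dist}(\bm{\xi},\Fix{\pi}) \in \ell_{exp}(m+1, \sqrt[N]{p(N)}\gamma)$.

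The main obstacle, and the only place requiring genuine care, is the bookkeeping of the polynomial degree: I must make sure that summing a degree-$m$ polynomial over a block index raises the degree by exactly one (not more), and that the various reindexings $k \leftrightarrow t/N$ and the extraction of constant factors $\rho^{-1}$, $p(N)$, $p(r)$ do not silently inflate the degree or break monotonicity/positivity of the prefactor (so that it genuinely lies in $\mathcal{P}_{m+1}$). A secondary point is ensuring the recursion \eqref{eq:block_recursion_plan} is valid from $k=0$, i.e. that the very first block also starts from a well-defined $D_0$; this is immediate since $\xi_0 \in \mathbb{R}^n$ is the given initial condition. Everything else is a routine application of the triangle inequality and the hypothesis $\pi \in \pExp{m,\gamma}$ block by block.
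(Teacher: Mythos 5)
Your proposal is correct and follows essentially the same route as the paper's proof: a block decomposition of length $N$ with block contraction factor $\rho=p(N)\gamma^N$, unrolling of the perturbed recursion at block boundaries (raising the polynomial degree by one), and within-block interpolation to convert $\rho^{k}$ into $\bigl(\sqrt[N]{p(N)}\gamma\bigr)^{t}$. The only cosmetic difference is that the paper formalizes the block dynamics as a lifted recursion $\zeta_{k+1}=\pi^{N}(F,\zeta_k)+w_k$ and therefore must separately verify $\Fix{\pi^N}=\Fix{\pi}$, a step your direct bookkeeping of $\operatorname{dist}(\xi_{kN},\Fix{\pi})$ harmlessly sidesteps.
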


We report the proof of Theorem~\ref{th:sufficiency_polyexp_general} in Appendix~\ref{app:proof_sufficiency_polyexp_general}. Theorem~\ref{th:sufficiency_polyexp_general} establishes a trade-off between how often we inject an exponentially decaying perturbation—as measured by $N \in \mathbb{N}$—and the degradation of the convergence rate. In particular, if $\pi \in \Exp{\gamma}$, then $p(1) = 1$ by definition and we observe that the asymptotic rate $\gamma$ does not change even with $N=1$, as only the order of the polynomial factor in \eqref{eq:exponential_convergence} is affected. This means that one can apply perturbation $\mathbf{v} \in \ell_{exp}(m, \gamma)$ at every time instant as per \eqref{eq:optimization_algorithms_fixed_point_plus_v} without deteriorating the asymptotic convergence rate. For the general case where $\pi \in \pExp{m, \gamma}$, instead, the convergence rate increases at most to the value $\sqrt[N]{p(N)}\gamma \in (\gamma, 1)$. As expected, when $N$ tends to infinity, we recover the original rate of the baseline algorithm because $\lim_{N \to \infty} \sqrt[N]{p(N)} = 1$; this corresponds to the case $v_t= 0$ for all $t$.

It remains crucial to understand how large is the class of linearly convergent algorithms $\xi_{t} = \nu_t(F, \xi_{t:0})$ that can be achieved by perturbing a baseline algorithm $\pi$ with an exponentially decaying learned update $v_t$ as per \eqref{eq:optimization_algorithms_fixed_point_plus_v}. Our next result establishes conditions on the baseline algorithm $\pi$ that guarantee that \emph{any} algorithm in $\pExp{m,\gamma}$ can be represented—provided that such target algorithm satisfies the following regularity condition.

\begin{definition}
\label{def:regularity} Given a linearly convergent algorithm $\bm{\nu} \in \pExp{m,\gamma}$ with iterates $\bm{\xi}$, define the sequence of updates $\mathbf{u}$ satisfying \[u_t=\xi_{t+1}-\xi_t.\] 
We say that $\bm{\nu}$ is \emph{regular} if this sequence of updates vanishes with the same exponential rate, i.e. $
        \mathbf{u} \in \ell_{exp}(m, \gamma)$
\end{definition}
In other words, the definition above excludes pathological cases of linearly convergent algorithms that can cycle indefinitely among the points in $\Fix{\pi}$ even when $\operatorname{dist}(\xi_{t},\Fix{\pi})=0$. As an example, consider the case where the objective $F$ is constant, the baseline algorithm is $\xi_{t+1} = \pi(F, \xi_t) = \xi_t$ with $\xi_0 \neq 0$, and $\Fix{\pi} = \mathbb{R}^n$. In this case, the algorithm $\xi_{t+1} = \nu_t(F, \xi_{t:0}) = -\xi_t$ converges to $\Fix{\pi}$ yet is not regular according to Definition~\ref{def:regularity}. In fact, by definition of $\bm{\nu}$, we have $u_t = -2\xi_0 \neq 0$ when $t$ is even and $u_t = 2\xi_0 \neq 0$ when $t$ is odd, hence the resulting $\mathbf{u}$ is not exponentially decaying. 

We are ready to present our completeness result.

\begin{theorem}
\label{th:necessity}
Let $\pi\in \Exp{\gamma}$ be such that $\pi(F,\xi)$ is Lipschitz continuous in $\xi$. Consider the augmented algorithm $\bm{\nu}$ with  iterates $\xi_t$ defined as per \eqref{eq:optimization_algorithms_fixed_point_plus_v}, and any  target algorithm $\chi_{t+1} = \sigma_t(F, \chi_{t:0})$ such that  $\bm{\sigma} \in \pExp{m,\gamma}$. If $\bm{\sigma}$ is regular, there exists a sequence $\mathbf{v}(F,\bm{\xi}) \in \ell_{exp}(m, \gamma)$ such that the iterations of $\bm{\nu}$ initialized with $\xi_0 = \chi_0$ are equivalent to those of $\bm{\sigma}$. Additionally, the augmented algorithm $\bm{\nu}$ belongs to $\PExp{\gamma}$ for any $\mathbf{v} \in \ell_{exp}(m,\gamma)$ with $m \in \mathbb{N}$.
\end{theorem}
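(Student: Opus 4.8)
The theorem has two parts: a \emph{completeness} claim (any regular target $\bm\sigma\in\pExp{m,\gamma}$ is reproducible by $\bm\nu$ for a suitable $\mathbf v\in\ell_{exp}(m,\gamma)$) and a \emph{soundness} claim (any such $\mathbf v$ keeps $\bm\nu$ in $\PExp{\gamma}$). I will treat them separately.

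For the completeness part, the idea is essentially to \emph{solve for} $\mathbf v$. Run the target algorithm $\bm\sigma$ from $\chi_0$ to obtain its iterates $\chi_0,\chi_1,\chi_2,\dots$, and simply \emph{define}
\[
v_t \;=\; \chi_{t+1} - \pi(F,\chi_t)\,, \qquad t\in\mathbb N\,.
\]
With $\xi_0=\chi_0$, an immediate induction shows $\xi_t=\chi_t$ for all $t$: indeed $\xi_{t+1}=\pi(F,\xi_t)+v_t=\pi(F,\chi_t)+\chi_{t+1}-\pi(F,\chi_t)=\chi_{t+1}$. So equivalence of iterations is free; the whole content is the decay estimate $\mathbf v\in\ell_{exp}(m,\gamma)$. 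Here I would write $v_t = (\chi_{t+1}-\chi_t) - (\pi(F,\chi_t)-\chi_t)$ and bound the two pieces. The first piece is the update sequence $u_t=\chi_{t+1}-\chi_t$, which lies in $\ell_{exp}(m,\gamma)$ precisely by the \emph{regularity} hypothesis on $\bm\sigma$ — this is exactly why Definition~\ref{def:regularity} is invoked. For the second piece, since $\pi\in\Exp{\gamma}$ (note: monotone rate, $p\equiv 1$) we have $\operatorname{dist}(\pi(F,\chi),\Fix\pi)\le \gamma\,\operatorname{dist}(\chi,\Fix\pi)$; combined with Lipschitz continuity of $\pi$ in $\xi$ and the fact that every point of $\Fix\pi$ is fixed by $\pi$, one gets $|\pi(F,\chi_t)-\chi_t|\le |\pi(F,\chi_t)-\pi(F,y)| + |y-\chi_t|$ for the nearest $y\in\Fix\pi$, i.e. $|\pi(F,\chi_t)-\chi_t|\le (L+1)\operatorname{dist}(\chi_t,\Fix\pi)$. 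Finally $\operatorname{dist}(\chi_t,\Fix\pi)=\operatorname{dist}(\xi_t,\Fix\pi)\le p(t)\gamma^t\operatorname{dist}(\chi_0,\Fix\pi)$ because $\bm\sigma\in\pExp{m,\gamma}$, so the second piece is in $\ell_{exp}(m,\gamma)$ as well. Since $\ell_{exp}(m,\gamma)$ is closed under sums (the sum of two degree-$\le m$ positive nondecreasing polynomials is again one), $\mathbf v\in\ell_{exp}(m,\gamma)$, which is the claim.

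For the soundness part, I want: if $\pi\in\Exp{\gamma}$ and $\mathbf v\in\ell_{exp}(m,\gamma)$, then $\bm\nu\in\PExp{\gamma}$. Write $d_t=\operatorname{dist}(\xi_t,\Fix\pi)$. Using the monotone contraction of $\pi$ and the triangle inequality for the distance function, $d_{t+1}=\operatorname{dist}(\pi(F,\xi_t)+v_t,\Fix\pi)\le \operatorname{dist}(\pi(F,\xi_t),\Fix\pi)+|v_t|\le \gamma d_t + |v_t|$, with $|v_t|\le q(t)\gamma^t$ for some $q\in\mathcal P_m(t)$. Unrolling this linear recursion gives $d_t\le \gamma^t d_0 + \sum_{k=0}^{t-1}\gamma^{t-1-k}|v_k|\le \gamma^t d_0 + \gamma^{t-1}\sum_{k=0}^{t-1} q(k)$, and $\sum_{k=0}^{t-1}q(k)\le t\,q(t)\in\mathcal P_{m+1}(t)$, so $d_t\le \tilde p(t)\gamma^t$ with $\tilde p\in\mathcal P_{m+1}(t)$. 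To get the \emph{relative} bound $d_t\le p(t)\gamma^t d_0$ required by Definition~\ref{def:linearly_convergent} (not merely an absolute one) I need $\mathbf v$ to itself scale with $d_0$ — which it does in the construction above, but a clean statement requires either stating the theorem's second claim for $\mathbf v$ scaling with $\operatorname{dist}(\xi_0,\Fix\pi)$, or invoking that $\PExp{\gamma}$ only tracks the exponential rate and absorbing $d_0$-independent constants; I would follow whichever convention the paper has set (the phrasing "only on the exponential convergence rate" in Definition~\ref{def:linearly_convergent} suggests the latter is acceptable here).

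The main obstacle is the regularity step and the handling of $\Fix\pi$ being a set rather than a point: one must be careful that $\operatorname{dist}(\xi_t,\Fix\pi)\to 0$ does \emph{not} by itself control $|\xi_{t+1}-\xi_t|$ (the iterates could drift along $\Fix\pi$), which is exactly the pathology Definition~\ref{def:regularity} rules out and why it is a hypothesis rather than a consequence. A secondary subtlety is that the bound $|\pi(F,\chi_t)-\chi_t|\le (L+1)\,\operatorname{dist}(\chi_t,\Fix\pi)$ needs the nearest point $y\in\Fix\pi$ to exist (closedness of $\Fix\pi$, or an $\varepsilon$-argument) and needs $\pi$'s Lipschitz constant $L$ to be uniform over the relevant region — I would note the Lipschitz hypothesis is what makes this go through. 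Everything else is bookkeeping with polynomials-times-geometric sequences.
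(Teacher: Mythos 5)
Your proposal is correct and follows essentially the same route as the paper: the same choice $v_t=\sigma_t(F,\chi_{t:0})-\pi(F,\chi_t)$, the same induction giving $\xi_t=\chi_t$, the same decomposition into $(\chi_{t+1}-\chi_t)$ handled by regularity and $(\pi(F,\chi_t)-\chi_t)$ bounded by $(L+1)\operatorname{dist}(\chi_t,\Fix{\pi})$ via the Lipschitz hypothesis and the nearest fixed point, and the same unrolling of $d_{t+1}\le\gamma d_t+|v_t|$ (which the paper delegates to the first half of the proof of Theorem~\ref{th:sufficiency_polyexp_general}) for the soundness direction. Your remark about the relative bound $d_t\le p(t)\gamma^t d_0$ versus the absolute bound for arbitrary $\mathbf v\in\ell_{exp}(m,\gamma)$ is a fair observation about a point the paper glosses over, and your resolution via the $\PExp{\gamma}$ convention is consistent with the paper's intent.
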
 

A few remarks are in order. First, the completeness property of Theorem~\ref{th:necessity} is key in the context of automating the design of augmented algorithms, e.g., by learning from sampled problems, as it implies that \eqref{eq:optimization_algorithms_fixed_point_plus_v} encompasses all regular linearly convergent algorithms  $\bm{\sigma} \in \pExp{m,\gamma}$. Second, when we learn an augmentation term $\mathbf{v}\in\ell_{exp}(m,\gamma)$ by searching over the entire space of exponentially decaying updates, it is desirable that the baseline algorithm satisfies the stronger condition $\pi\in\Exp{\gamma}$. This is because, if $\pi\in\pExp{m,\gamma} ~\setminus \Exp{\gamma}$, Theorem~\ref{th:sufficiency_polyexp_general} only guarantees linear convergence—with the worsened rate $\sqrt[N]{p(N)}\,\gamma$—for those $\mathbf{v}$ chosen exactly as in \eqref{eq:v_construction_zeros}.  In other words, without $\pi\in\Exp{\gamma}$, most perturbations in $\ell_{exp}(m,\gamma)$ may not preserve linear convergence. Third, the assumption that $\pi(F,\xi)$ is Lipschitz continuous in $\xi$ is mild; we will show in the following that this condition is satisfied for baseline algorithms widely used for convex and composite optimization.

The rest of this section is dedicated to establishing how Theorem~\ref{th:sufficiency_polyexp_general} and Theorem~\ref{th:necessity} can be used to augment existing solvers for composite optimization problems in the form \eqref{eq:composite_opt_prob} drawn from specific classes $\mathcal{F}$.

\subsection{Results for smooth optimization}
\label{subsec:application}
 We first consider the case \eqref{eq:composite_opt_prob} where $g(x)=0$ for all $x \in \mathbb{R}^d$, leaving us with the task of minimizing a $\beta$-smooth function $F(x)=f(x)$. Our first result focuses on classes of possibly nonconvex functions for which standard gradient descent achieves monotonic linear convergence.

\begin{corollary}
\label{co:RSI}
    Let $\mathcal{F}_{RSI}^{\beta,\mu}$ be the class of $\beta$-smooth functions satisfying the restricted secant inequality (RSI) with constant $\mu > 0$, that is, those for which it holds
    \begin{equation}
    \label{eq:RSI_condition}
        \nabla F(x)^\top (x-x^{\star}) \geq \frac{\mu}{2} \operatorname{dist}(x, \mathcal{X}^\star)^2\,, ~ \forall x \in \mathbb{R}^d\,,
    \end{equation}
    for any $x^\star \in \argmin_{\mathcal{X}^\star} \operatorname{dist}(x,\mathcal{X}^\star)$. Let $\pi$ be the gradient descent update rule $\pi(F,\xi_t)=\xi_t-\eta\nabla F(\xi_t)$ with $\xi_t = x_t$, $\eta=\frac{\mu}{\beta^2}$, and $\gamma = \sqrt{1-\frac{\mu^2}{\beta^2}}$. Then, any regular algorithm $\bm{\sigma}\in \mathtt{pExp}_{\mathcal{F}_{RSI}^{\beta,\mu}}^\pi(m,\gamma)$ can be written as 
    \begin{equation}
        \label{eq:augmented_GD_RSI}
        x_{t+1} = \nu_t(F, x_{t:0}) = x_t-\eta \nabla F(x_t)+v_t(F,x_{t:0})\,,
    \end{equation}
    with $\mathbf{v} \in \ell_{exp}(m,\gamma)$. Vice versa, for any $\mathbf{v} \in \ell_{exp}(m,\gamma)$, the algorithm \eqref{eq:augmented_GD_RSI} is such that $\bm{\nu} \in \widehat{\mathtt{pExp}}_{\mathcal{F}_{RSI}^{\beta,\mu}}^\pi(\gamma)$.
\end{corollary}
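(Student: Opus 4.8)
The plan is to obtain Corollary~\ref{co:RSI} as a direct specialisation of Theorem~\ref{th:necessity} to the pair $(\mathcal{F},\pi)$ with $\mathcal{F}=\mathcal{F}_{RSI}^{\beta,\mu}$ and the gradient-descent operator $\pi(F,\xi)=\xi-\eta\nabla F(\xi)$; in this instance the state and output variables in \eqref{eq:optimization_algorithms_fixed_point} coincide ($n=d$, $\phi=\mathrm{id}$), so the augmented recursion \eqref{eq:optimization_algorithms_fixed_point_plus_v} is literally \eqref{eq:augmented_GD_RSI}. Once the two hypotheses of Theorem~\ref{th:necessity} are verified, both assertions follow at once: the completeness part gives the representation of every regular $\bm{\sigma}\in\mathtt{pExp}_{\mathcal{F}_{RSI}^{\beta,\mu}}^\pi(m,\gamma)$ as \eqref{eq:augmented_GD_RSI} with $\mathbf{v}\in\ell_{exp}(m,\gamma)$, and the sufficiency part gives $\bm{\nu}\in\widehat{\mathtt{pExp}}_{\mathcal{F}_{RSI}^{\beta,\mu}}^\pi(\gamma)$ for every such $\mathbf{v}$. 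So the entire task reduces to checking: (i) $\pi(F,\cdot)$ is Lipschitz in its second argument; and (ii) $\pi\in\mathtt{Exp}_{\mathcal{F}_{RSI}^{\beta,\mu}}^\pi(\gamma)$, that is, gradient descent with the prescribed step size is \emph{monotonically} linearly convergent at rate $\gamma$ uniformly over the whole class.

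Step (i) is immediate: since $f$, hence $F$, is $\beta$-smooth, $\|\pi(F,x)-\pi(F,y)\|\le\|x-y\|+\eta\|\nabla F(x)-\nabla F(y)\|\le(1+\eta\beta)\|x-y\|$. Before Step (ii) I would pin down the fixed-point set. The fixed points of gradient descent are exactly the stationary points $\{x:\nabla F(x)=0\}$; substituting $\nabla F(x)=0$ into \eqref{eq:RSI_condition} forces $\operatorname{dist}(x,\mathcal{X}^\star)=0$, so every stationary point lies in $\mathcal{X}^\star$, while conversely every global minimiser of the differentiable function $F$ is stationary. Hence $\Fix{\pi}=\mathcal{X}^\star$, so $\operatorname{dist}(\xi_t,\Fix{\pi})=\operatorname{dist}(x_t,\mathcal{X}^\star)$, and the metric in Definition~\ref{def:linearly_convergent} coincides with the one in \eqref{eq:RSI_condition}.

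Step (ii) is the core computation: a one-step contraction estimate holding uniformly over $\mathcal{F}_{RSI}^{\beta,\mu}$. Fix $x\in\mathbb{R}^d$ and choose $x^\star\in\argmin_{y\in\mathcal{X}^\star}\|x-y\|$, so that $\|x-x^\star\|=\operatorname{dist}(x,\mathcal{X}^\star)$ and $\nabla F(x^\star)=0$. Then
\[
\operatorname{dist}(\pi(F,x),\mathcal{X}^\star)^2\le\|x-\eta\nabla F(x)-x^\star\|^2=\|x-x^\star\|^2-2\eta\,\nabla F(x)^\top(x-x^\star)+\eta^2\|\nabla F(x)\|^2 .
\]
I would lower-bound the inner product $\nabla F(x)^\top(x-x^\star)$ by the RSI inequality \eqref{eq:RSI_condition} and upper-bound $\|\nabla F(x)\|^2=\|\nabla F(x)-\nabla F(x^\star)\|^2\le\beta^2\|x-x^\star\|^2$ by $\beta$-smoothness; with the step size and rate of the statement the right-hand side becomes $\gamma^2\operatorname{dist}(x,\mathcal{X}^\star)^2$, that is $\pi\in\mathtt{Exp}_{\mathcal{F}_{RSI}^{\beta,\mu}}^\pi(\gamma)$ with constant polynomial factor $p\equiv1$. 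Unrolling over $t$ then propagates the bound along the entire trajectory.

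I expect Step (ii) to be the main obstacle: one must secure a genuine \emph{contraction} — a rate strictly below $1$ \emph{with polynomial factor exactly $1$} — uniformly over every member of $\mathcal{F}_{RSI}^{\beta,\mu}$, not merely a $\widehat{\mathtt{pExp}}$-type bound carrying a nontrivial constant. This strengthening to $\mathtt{Exp}_{\mathcal{F}_{RSI}^{\beta,\mu}}^\pi(\gamma)$ is exactly what licenses the completeness half of Theorem~\ref{th:necessity} with the \emph{full} freedom $\mathbf{v}\in\ell_{exp}(m,\gamma)$; had we only $\pi\in\mathtt{pExp}\setminus\mathtt{Exp}$, Theorem~\ref{th:sufficiency_polyexp_general} would certify linear convergence solely for the periodically injected perturbations \eqref{eq:v_construction_zeros}. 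A second point needing care — already addressed above via RSI — is the identification $\Fix{\pi}=\mathcal{X}^\star$, since $\mathcal{F}_{RSI}^{\beta,\mu}$ contains nonconvex functions whose spurious stationary points must be excluded.
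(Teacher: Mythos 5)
Your overall route is the same as the paper's: reduce the corollary to Theorem~\ref{th:necessity} by checking (i) Lipschitz continuity of $\pi(F,\cdot)$ (your $(1+\eta\beta)$ bound is exactly the paper's) and (ii) $\pi\in\mathtt{Exp}_{\mathcal{F}_{RSI}^{\beta,\mu}}^{\pi}(\gamma)$. The paper discharges (ii) by citing Theorem~2.1 of \cite{liao2024error}, whereas you attempt a self-contained one-step contraction; you also add the explicit identification $\Fix{\pi}=\mathcal{X}^\star$ via RSI, which the paper leaves implicit and which is a worthwhile clarification given that $\mathcal{F}_{RSI}^{\beta,\mu}$ contains nonconvex members.

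The one substantive problem is in your Step~(ii) arithmetic. With the RSI constant exactly as written in \eqref{eq:RSI_condition} (with the factor $\tfrac{\mu}{2}$) and $\eta=\tfrac{\mu}{\beta^2}$, your expansion gives
\begin{equation*}
\operatorname{dist}(\pi(F,x),\mathcal{X}^\star)^2\;\le\;\bigl(1-2\eta\tfrac{\mu}{2}+\eta^2\beta^2\bigr)\operatorname{dist}(x,\mathcal{X}^\star)^2\;=\;\bigl(1-\tfrac{\mu^2}{\beta^2}+\tfrac{\mu^2}{\beta^2}\bigr)\operatorname{dist}(x,\mathcal{X}^\star)^2\,,
\end{equation*}
i.e.\ rate $1$, not $\gamma^2=1-\tfrac{\mu^2}{\beta^2}$; so the claim that ``the right-hand side becomes $\gamma^2\operatorname{dist}(x,\mathcal{X}^\star)^2$'' does not follow as stated. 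The computation closes only under the normalization $\nabla F(x)^\top(x-x^\star)\ge\mu\operatorname{dist}(x,\mathcal{X}^\star)^2$ (no $\tfrac12$), which is the convention of \cite{liao2024error} and is the one consistent with the advertised $\eta$ and $\gamma$; alternatively, keeping the $\tfrac{\mu}{2}$ convention one should take $\eta=\tfrac{\mu}{2\beta^2}$ and $\gamma=\sqrt{1-\tfrac{\mu^2}{4\beta^2}}$. This factor-of-two mismatch is arguably inherited from the corollary's statement itself (the paper avoids it by citation), but since you flagged the contraction as the core step you need to either fix the constants or cite the reference to make the argument airtight. Everything else in your proposal is sound.
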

\begin{proof}
    By Theorem 2.1 of \cite{liao2024error}, it holds that \eqref{eq:exponential_convergence} holds for the gradient descent algorithm $\xi_{t+1}=\pi(F, \xi_t)=\xi_t-\eta\nabla F(\xi_t)$ with $\eta = \frac{\mu}{\beta^2}$ with $\gamma = \sqrt{1-\frac{\mu^2}{\beta^2}} \in (0,1)$. Further, we have that $\pi(F, \xi_t)$ is Lipschitz continuous in $\xi_t$ since $|\pi(F, x) - \pi(F,y)| = |x - y - \eta \nabla F(x) + \eta \nabla F(y)| \leq (1+\eta\beta) |x-y|$, where the last inequality follows from the $\beta$-smoothness of $F \in \mathcal{F}_{RSI}^{\beta,\mu}$. The result then follows by applying Theorem~\ref{th:necessity}.
\end{proof}

\smallskip

The result of Corollary~\ref{co:RSI} enables learning over the class of \emph{all} the regular linearly convergent algorithms in $\mathtt{pExp}_{\mathcal{F}_{RSI}^{\beta,\mu}}^\pi(m,\gamma)$, while ensuring that the augmented algorithm \eqref{eq:augmented_GD_RSI} never leaves the class $\widehat{\mathtt{pExp}}_{\mathcal{F}_{RSI}^{\beta,\mu}}^\pi(\gamma)$, irrespectively of how ``badly'' the augmentation term $\mathbf{v}\in \ell_{exp}(m,\gamma)$ may be chosen. 

A few comments regarding the generality of the class of functions $\mathcal{F}_{RSI}^{\beta,\mu}$  are in order. First, $\mathcal{F}_{RSI}$ encompasses certain nonconvex functions, as highlighted in \cite{zhang2013gradient}. Second, it holds that $\mathcal{F}_{SC}^{\beta,\mu}\subset \mathcal{F}_{cPL}^{\beta,\mu}\subset \mathcal{F}_{RSI}^{\beta,\mu}$ , where $\mathcal{F}_{SC}^{\beta,\mu}$ is the set of $\beta$-smooth  and $\mu$-strongly convex functions complying with
\begin{equation}
    \label{eq:SC}
    F(y)\geq F(x)+\nabla F(x)^\top(y-x)+\frac{\mu}{2}|y-x|^2\,,
\end{equation}
for some $\mu>0$, and $\mathcal{F}_{cPL}^{\beta,\mu}$  is the set of all the $\beta$-smooth and convex functions  that  comply with the Polyak--\L{}ojasiewicz (PL) inequality 
\begin{equation}
    \label{eq:PL_condition}
    F(x)-\min_{x \in \mathbb{R}^d} F(x)\leq \frac{1}{2\mu}|\nabla F(x)|^2\,,
\end{equation}
for some $\mu>0$.
\begin{remark}
    It is well known that $\mathcal{F}_{RSI}^{\beta,\mu} \subseteq \mathcal{F}_{PL}^{\beta,\frac{\mu^2}{4\beta}}$, where $\mathcal{F}_{PL}^{\beta,\mu}$ is the set of all possibly nonconvex functions satisfying \eqref{eq:PL_condition}, see \cite{karimi2016linear}. For functions in $\mathcal{F}_{PL}^{\beta,\mu}$, the gradient descent rule $\pi(F,x)=-\frac{1}{\beta} \nabla F(x)$  achieves linear convergence in the function value as per 
    \begin{equation*}
        F(x_t)-F^\star\leq \left(1-\frac{\mu}{\beta}\right)^t(F(x_0)-F^\star)\,,
    \end{equation*}
    where $F^\star = \min_{x \in \mathbb{R}^d} ~F(x)$. However,  $\pi$ induces a monotonically linearly convergent sequence of iterates only if the restricted secant inequality \eqref{eq:RSI_condition} also holds, see \cite{liao2024error}.
\end{remark}

 Corollary~\ref{co:RSI} ensures a complete parametrization of regular linearly convergent algorithms with the same rate $\gamma$ as gradient descent for all functions in $\mathcal{F}_{RSI}^{\beta,\mu}$. For the special case of strongly convex functions $F \in \mathcal{F}_{SC}^{\beta,\mu}$, one typically wants to augment ad-hoc algorithms tailored to $\mathcal{F}_{SC}^{\beta,\mu}$ such as NAG method \cite{nesterov1983method}, the heavy-ball method \cite{polyak1964some,ghadimi2015global}, or optimal-rate algorithms such as those characterized in \cite{lessard2016analysis,van2017fastest}. 

Motivated as such, we show compatibility of the proposed framework with the augmentation of accelerated algorithms for objectives $F \in \mathcal{F}_{SC}^{\beta,\mu}$. %
\begin{corollary}
\label{co:NAG}
Consider the NAG algorithm 
    \begin{equation}
    \label{eq:nesterov_alg}
        \pi(F,\xi_t) = 
        \begin{bmatrix}
            1 + \alpha & -\alpha\\
            1 & 0
        \end{bmatrix} \xi_t + 
        \begin{bmatrix}
            -\eta\\
            0
        \end{bmatrix}
        \grad F\left(\begin{bmatrix}
            1 + \alpha & -\alpha
        \end{bmatrix} \xi_t \right)\,,
    \end{equation}
where  $\xi_t = \begin{bmatrix}
        x_t^\top & x_{t-1}^\top
    \end{bmatrix}^\top$
    and $\alpha \geq 0$ is the momentum coefficient. Let $\eta=\frac{1}{\beta}$ and $\alpha = \frac{\sqrt{\kappa}-1}{\sqrt{\kappa}+1}$, where $\kappa = \frac{\beta}{\mu}\geq 1$ is the condition number.  Choose any target rate degradation factor $\tau \in (1, \frac{1}{\gamma})$, where $\gamma = \sqrt{1 - \frac{1}{\sqrt{\kappa}}}$. Then, for any $N \in \mathbb{N}$ such that $p(N) < \tau^{N}$ and $\mathbf{v}$ constructed as per \eqref{eq:v_construction_zeros} using any $\mathbf{w} \in \ell_{exp}(m, \tau\gamma)$, the augmented algorithm $\bm{\nu}(F, \xi_{t:0})$ defined by $\xi_{t+1} = \pi(F,\xi_t) + v_t\,, $ is such that $\bm{\nu} \in \widehat{\mathtt{pExp}}_{\mathcal{F}_{SC}^{\beta,\mu}}^\pi(\tau \gamma)$.
\end{corollary}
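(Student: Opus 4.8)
The plan is to reduce Corollary~\ref{co:NAG} to a direct application of Theorem~\ref{th:sufficiency_polyexp_general}. First I would verify that the NAG recursion \eqref{eq:nesterov_alg}, with the stated choices $\eta = 1/\beta$ and $\alpha = (\sqrt{\kappa}-1)/(\sqrt{\kappa}+1)$, is a linearly convergent fixed-point algorithm in the sense of Definition~\ref{def:linearly_convergent} for the class $\mathcal{F}_{SC}^{\beta,\mu}$. Concretely, I would invoke a standard IQC/Lyapunov analysis of Nesterov's method for strongly convex smooth functions (as in \cite{lessard2016analysis,nesterov1983method}) to conclude that $\operatorname{dist}(\xi_t,\Fix{\pi}) \le p(t)\gamma^t \operatorname{dist}(\xi_0,\Fix{\pi})$ with $\gamma = \sqrt{1 - 1/\sqrt{\kappa}}$ and some polynomial factor $p(t) \in \mathcal{P}_m(t)$; here the fixed-point set is $\Fix{\pi} = \{[x^{\star\top}\ x^{\star\top}]^\top : x^\star \in \mathcal{X}^\star\}$ and $\phi$ extracts the first block. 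The key point is that this places $\pi \in \mathtt{pExp}_{\mathcal{F}_{SC}^{\beta,\mu}}^\pi(m,\gamma)$, so the hypothesis of Theorem~\ref{th:sufficiency_polyexp_general} is met.

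Next I would handle the bookkeeping relating the degradation factor $\tau$ to the parameter $N$. Theorem~\ref{th:sufficiency_polyexp_general} asks us to pick $N \in \mathbb{N}$ with $\rho = p(N)\gamma^N < 1$; under the corollary's standing assumption $p(N) < \tau^N$ with $\tau < 1/\gamma$, we get $\rho = p(N)\gamma^N < (\tau\gamma)^N < 1$, so such an $N$ is admissible. Moreover $\sqrt[N]{p(N)}\gamma < \tau\gamma$, so the rate guaranteed by \eqref{eq:new_rate_general}, namely $\sqrt[N]{p(N)}\gamma$, is no worse than the claimed $\tau\gamma$. Then I would check that the auxiliary signal requirement of Theorem~\ref{th:sufficiency_polyexp_general}, $\mathbf{w} \in \ell_{exp}(m,\rho)$, is satisfied by any $\mathbf{w} \in \ell_{exp}(m,\tau\gamma)$: since $\rho < (\tau\gamma)^N \le \tau\gamma$ — wait, I need $\rho \ge \tau\gamma$ or $\le$? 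Actually $\ell_{exp}(m,\rho) \supseteq \ell_{exp}(m,\tau\gamma)$ requires $\rho \ge \tau\gamma$, which is not guaranteed; so more carefully, I would instead note that one may simply choose $\mathbf{w} \in \ell_{exp}(m,\rho)$ directly, or observe that the corollary's $\mathbf{w} \in \ell_{exp}(m,\tau\gamma)$ decays at least as fast as needed only if $\tau\gamma \le \rho$; the cleaner route is to absorb any mismatch by noting $\ell_{exp}(m,\tau\gamma)\subseteq \ell_{exp}(m,\rho)$ fails in general, so I would instead apply Theorem~\ref{th:sufficiency_polyexp_general} with the weaker but valid conclusion and then monotonicity of the $\ell_{exp}$ classes in the rate to land the final statement in $\widehat{\mathtt{pExp}}_{\mathcal{F}_{SC}^{\beta,\mu}}^\pi(\tau\gamma)$.

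The construction of $v_t$ via \eqref{eq:v_construction_zeros} is exactly the hypothesis of Theorem~\ref{th:sufficiency_polyexp_general}, so once the rate/class arithmetic above is settled, the theorem directly yields $\operatorname{dist}(\bm{\xi},\Fix{\pi}) \in \ell_{exp}(m+1,\sqrt[N]{p(N)}\gamma) \subseteq \ell_{exp}(m+1,\tau\gamma)$, which is precisely the assertion $\bm{\nu} \in \widehat{\mathtt{pExp}}_{\mathcal{F}_{SC}^{\beta,\mu}}^\pi(\tau\gamma)$ after translating the $\ell_{exp}$ statement back to the $\operatorname{dist}$-relative bound of Definition~\ref{def:linearly_convergent} (using Lipschitz continuity of $\pi$ in $\xi$ to relate $\operatorname{dist}(\xi_0,\Fix{\pi})$ to the leading constant, as done in the proof of Corollary~\ref{co:RSI}).

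I expect the main obstacle to be the first step: establishing that NAG genuinely satisfies \eqref{eq:exponential_convergence} as a bound on $\operatorname{dist}(\xi_t,\Fix{\pi})$ rather than merely on the function-value gap $F(x_t)-F^\star$ — the latter is the textbook statement, and converting it to an iterate-distance bound requires either quadratic growth of $F$ (which strong convexity provides) together with a careful treatment of the two-block state $\xi_t = [x_t^\top\ x_{t-1}^\top]^\top$, or a direct Lyapunov argument with a state-dependent quadratic form whose sublevel sets control $\operatorname{dist}(\xi_t,\Fix{\pi})^2$. A secondary, more mechanical difficulty is making the $\tau$-versus-$\rho$ and $\ell_{exp}$-rate-monotonicity manipulations airtight, and confirming that the polynomial $p$ appearing in the corollary's hypothesis $p(N) < \tau^N$ is the same $p$ furnished by the NAG convergence analysis.
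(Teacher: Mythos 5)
Your overall route is the paper's: cite the standard analyses of Nesterov's method on $\mathcal{F}_{SC}^{\beta,\mu}$ to place $\pi$ in $\mathtt{pExp}^{\pi}_{\mathcal{F}}(0,\gamma)$ with $\gamma=\sqrt{1-1/\sqrt{\kappa}}$ (the Lyapunov/IQC certificates in the cited references bound the iterate distance with a constant prefactor, so your worry about function-value versus distance bounds is settled by the citation, and the $p$ in the hypothesis $p(N)<\tau^{N}$ is exactly that constant), then observe $\rho=p(N)\gamma^{N}<(\tau\gamma)^{N}<1$ and $\sqrt[N]{p(N)}\,\gamma<\tau\gamma$, and invoke Theorem~\ref{th:sufficiency_polyexp_general}. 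The paper's proof consists of precisely these two observations.

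The step you leave hanging is, however, a real one, and your proposed patch points the wrong way. Since $\rho<(\tau\gamma)^{N}\le\tau\gamma$, the inclusion between the relevant classes is $\ell_{exp}(m,\rho)\subseteq\ell_{exp}(m,\tau\gamma)$: the corollary admits auxiliary signals $\mathbf{w}$ that decay \emph{more slowly} than Theorem~\ref{th:sufficiency_polyexp_general} requires, so ``monotonicity of the $\ell_{exp}$ classes in the rate'' cannot absorb the mismatch. If one reruns the theorem's argument with $|w_k|\le p_m(k)(\tau\gamma)^{k}$ in the subsampled index $k$, the convolution bound decays at rate $\max\{\rho,\tau\gamma\}=\tau\gamma$ in $k$, which translates to the rate $(\tau\gamma)^{1/N}>\tau\gamma$ in the original index $t$ whenever $N\ge 2$ --- strictly weaker than the claimed conclusion. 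The argument closes up either by requiring $\mathbf{w}\in\ell_{exp}(m,\rho)$ exactly as in the theorem, or by measuring the decay of the injected perturbation against the original time index (equivalently, $\mathbf{w}\in\ell_{exp}(m,(\tau\gamma)^{N})\subseteq\ell_{exp}(m,\rho)$... no, $\subseteq$ fails there too, but the convolution then decays at rate $\max\{\rho,(\tau\gamma)^{N}\}=(\tau\gamma)^{N}$ in $k$ and hence $\tau\gamma$ in $t$, which is what is needed). The paper's own two-line proof does not address this point either, so flagging it was the right instinct; but you must commit to one of these repairs rather than leaving the set inclusion unresolved, since as written your argument does not deliver the stated rate $\tau\gamma$ for $N\ge 2$.
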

\begin{proof}    
    It is well known that the NAG algorithm \eqref{eq:nesterov_alg} applied to the class $\mathcal{F}_{SC}^{\beta, \mu}$ with the parameters $\alpha$ and $\eta$ as above is such that $\pi \in \pExp{0, \gamma}$, see \cite{nesterov1983method,lessard2016analysis}. Since $\sqrt[N]{p(N)} < \tau < \frac{1}{\gamma}$, we have that $p(N) \gamma^N < 1$ and Theorem~\ref{th:sufficiency_polyexp_general} applies.
\end{proof}

While Corollary~\ref{co:NAG} focuses on the case where  NAG is used as the baseline algorithm $\pi$ in \eqref{eq:optimization_algorithms_fixed_point_plus_v}, we remark that the results extend analogously to any baseline algorithm $\pi \in \mathtt{pExp}_{\mathcal{F}_{SC}^{\beta,\mu}}^\pi(m,\gamma)$ such as those with optimal convergence rates designed using IQCs as per \cite{lessard2016analysis,van2017fastest}. As also discussed after Theorem~\ref{th:sufficiency_polyexp_general}, we note that enhancing accelerated algorithms, which are not monotonic in general, involves a trade-off between keeping the worst-case degradation rate $\tau$ as small as possible and the frequency at which we can apply a learned update. Last, we remark that one can always impose a target $\tau \in (1, \frac{1}{\gamma})$. Indeed, a large enough $N \in \mathbb{N}$ such that $p(N) < \tau^N$ always exists since, for $N$ large enough, the exponential term dominates over the polynomial one.

\subsection{Results for nonsmooth optimization}

We now turn our attention to the case \eqref{eq:composite_opt_prob} where the objective $F(x) = f(x) + g(x)$ is nonsmooth. Our first result focuses on the class $\mathcal{F}_{cPL}^{\!~\infty, \mu}$ of potentially nonsmooth proper, lower semi-continuous, convex functions that comply with the following inequality
\begin{equation}
    \label{eq:PL_condition_nonsmooth}
    F(x)-\min_{x \in \mathbb{R}^d} F(x)\leq \frac{1}{2\mu}\operatorname{dist}(0, \partial F(x))^2\,,
\end{equation}
where $\partial F(x)$ is the convex subdifferential of $F$ at $x$, defined as 
\begin{equation*}
    \partial F(x) = \left\{s \in \mathbb{R}^d : F(y) \geq F(x) + s^\top (y-x)\,, ~\forall y \in \mathbb{R}^d\right\}\,.
\end{equation*}
In particular, note that \eqref{eq:PL_condition_nonsmooth} corresponds to \eqref{eq:PL_condition} when $F$ is differentiable.
\begin{corollary}
\label{co:cPL_nonsmooth}
    Consider the class of functions $F \in \mathcal{F}_{cPL}^{\!~\infty,\mu}$. Let $\pi$ be the proximal point algorithm performing the iterations \begin{equation}
        \label{eq:PPM_original_cPL_nonsmooth}
        x_{t+1} = \operatorname{prox}_F^c(x_t) = \min_{x \in \mathbb{R}^d}~F(x) + \frac{1}{2c} |x - x_t|^2\,, 
    \end{equation}
    where $c > 0$. Let $\gamma = \min\left\{\frac{1}{\sqrt{1 + c \mu}}, \frac{1}{\sqrt{1+\frac{c^2}{\beta \mu}}}\right\} \in (0,1)$. Then, any regular algorithm $\bm{\sigma} \in \mathtt{pExp}_{\mathcal{F}_{cPL}^{\infty,\mu}}^\pi(m,\gamma)$ can be written as 
    \begin{equation}
        \label{eq:augmented_PPM_cPL_nonsmooth}
        x_{t+1} = \nu_t(F, x_{t:0}) = \operatorname{prox}_F^c(x_t) + v_t(F,x_{t:0})\,,
    \end{equation}
    with $\mathbf{v} \in \ell_{exp}(m,\gamma)$. Vice versa, for any $\mathbf{v} \in \ell_{exp}(m,\gamma)$, the algorithm \eqref{eq:augmented_PPM_cPL_nonsmooth} is such that $\bm{\nu} \in \widehat{\mathtt{pExp}}_{\mathcal{F}_{cPL}^{\infty,\mu}}^\pi(\gamma)$.
\end{corollary}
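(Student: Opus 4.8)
The plan is to mirror the proof of Corollary~\ref{co:RSI} and reduce the statement to an application of Theorem~\ref{th:necessity}. That theorem has exactly two hypotheses on the baseline algorithm $\pi$: that $\pi(F,\xi)$ be Lipschitz continuous in $\xi$, and that $\pi \in \Exp{\gamma}$, i.e.\ monotone linear convergence to $\Fix{\pi}$ at rate $\gamma$ over the class at hand. So the proof reduces to verifying these two properties for $\pi = \operatorname{prox}_F^c$ and $\mathcal{F} = \mathcal{F}_{cPL}^{\infty,\mu}$; once they hold, the completeness direction of Theorem~\ref{th:necessity} yields that every regular $\bm{\sigma}\in\mathtt{pExp}^\pi_{\mathcal{F}_{cPL}^{\infty,\mu}}(m,\gamma)$ is reproduced by \eqref{eq:augmented_PPM_cPL_nonsmooth} for a suitable $\mathbf{v}\in\ell_{exp}(m,\gamma)$ when $\xi_0=\chi_0$, while the converse direction yields that for every $\mathbf{v}\in\ell_{exp}(m,\gamma)$ the augmented iteration belongs to $\widehat{\mathtt{pExp}}^\pi_{\mathcal{F}_{cPL}^{\infty,\mu}}(\gamma)$.

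The Lipschitz hypothesis is immediate: since $F$ is proper, convex and lower semi-continuous, $\operatorname{prox}_F^c$ is single-valued and firmly nonexpansive, hence $1$-Lipschitz in $x$; moreover $\operatorname{prox}_F^c(x)=x$ iff $0\in\partial F(x)$, so $\Fix{\operatorname{prox}_F^c}=\mathcal{X}^\star$, which is nonempty by assumption, and the set $\Fix{\pi}$ of Definition~\ref{def:linearly_convergent} coincides with the solution set.

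The crux is to show $\pi\in\Exp{\gamma}$, i.e.\ $\operatorname{dist}(x_1,\mathcal{X}^\star)\le\gamma\,\operatorname{dist}(x_0,\mathcal{X}^\star)$ for all $x_0\in\mathbb{R}^d$ and all $F\in\mathcal{F}_{cPL}^{\infty,\mu}$. Writing $d_t=\operatorname{dist}(x_t,\mathcal{X}^\star)$, taking $x^\star$ to be the projection of $x_t$ onto $\mathcal{X}^\star$, and using the optimality condition $s_{t+1}:=\tfrac1c(x_t-x_{t+1})\in\partial F(x_{t+1})$, I would expand $|x_{t+1}-x^\star|^2$ with $x_{t+1}-x_t=-cs_{t+1}$ and invoke the subgradient inequality $s_{t+1}^\top(x_{t+1}-x^\star)\ge F(x_{t+1})-F^\star\ge0$ to obtain $d_{t+1}^2 \le d_t^2 - c^2|s_{t+1}|^2 - 2c\bigl(F(x_{t+1})-F^\star\bigr)$. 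The nonsmooth PL inequality \eqref{eq:PL_condition_nonsmooth} at $x_{t+1}$ — which for a convex lsc function is equivalent to the quadratic growth bound $F(x)-F^\star\ge\tfrac\mu2\operatorname{dist}(x,\mathcal{X}^\star)^2$, see \cite{karimi2016linear} — then gives $(1+c\mu)d_{t+1}^2\le d_t^2$, i.e.\ the factor $\tfrac1{\sqrt{1+c\mu}}$. The second factor $\tfrac1{\sqrt{1+c^2/(\beta\mu)}}$ requires a separate, smoothness-dependent contraction estimate that exploits the composite structure $F=f+g$ with $f$ $\beta$-smooth: from $s_{t+1}-\nabla f(x_{t+1})\in\partial g(x_{t+1})$ and $-\nabla f(x^\star)\in\partial g(x^\star)$ (the latter from $0\in\partial F(x^\star)$), one combines the $\beta$-co-coercivity of $\nabla f$, monotonicity of $\partial g$, and the PL/error-bound estimate to lower-bound $|s_{t+1}|$ (equivalently $\operatorname{dist}(0,\partial F(x_{t+1}))$) in terms of $d_{t+1}$, yielding $(1+c^2/(\beta\mu))d_{t+1}^2\le d_t^2$. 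Taking the smaller of the two contraction factors gives $d_{t+1}\le\gamma d_t$ with $\gamma$ as stated; and since every step is algebraic in $x_0$ and $x_1=\operatorname{prox}_F^c(x_0)$ with all function/subgradient inequalities evaluated at the in-domain iterate $x_1$, the bound holds even when $x_0\notin\operatorname{dom}F$, as required by the ``$\forall\xi_0\in\mathbb{R}^n$'' in Definition~\ref{def:linearly_convergent}. Hence $\pi\in\Exp{\gamma}$ and Theorem~\ref{th:necessity} applies.

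I expect the main obstacle to be this last step, specifically pinning down the precise smoothness-dependent factor $\tfrac1{\sqrt{1+c^2/(\beta\mu)}}$: unlike the classical $\tfrac1{\sqrt{1+c\mu}}$ proximal-point rate, it requires carefully balancing co-coercivity of $\nabla f$, monotonicity of $\partial g$, and the PL/quadratic-growth estimate without losing constants, and then verifying that the minimum of the two factors is the sharpest contraction one can extract. A secondary point to handle with care is the PL $\Leftrightarrow$ quadratic-growth equivalence in the nonsmooth convex setting and the bookkeeping ensuring no inequality becomes vacuous for $x_0$ outside $\operatorname{dom}F$. (I also note that the class in the final line of the statement should read $\mathcal{F}_{cPL}^{\infty,\mu}$ rather than $\mathcal{F}_{RSI}^{\infty,\mu}$.)
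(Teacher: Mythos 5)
Your proposal is correct and follows essentially the same route as the paper: check that $\operatorname{prox}_F^c$ is Lipschitz in $x$ (it is nonexpansive) and that the proximal point method is monotonically linearly convergent at rate $\gamma$ on $\mathcal{F}_{cPL}^{\infty,\mu}$, then apply Theorem~\ref{th:necessity} for both directions. The one difference is that the paper does not derive the one-step contraction at all — it cites \cite[Theorem~4.2]{liao2024error} together with the error-bound/quadratic-growth relations of \cite{karimi2016linear}, so the smoothness-dependent factor $1/\sqrt{1+c^2/(\beta\mu)}$ that you correctly identify as the main obstacle (your own derivation cleanly yields only the $1/\sqrt{1+c\mu}$ factor) is exactly the piece the paper outsources to that reference; your remark that the final class should read $\mathcal{F}_{cPL}^{\infty,\mu}$ rather than $\mathcal{F}_{RSI}^{\infty,\mu}$ is indeed a typo in the statement.
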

\begin{proof}
    We begin by showing the baseline algorithm $\pi(F,x_t)$ in \eqref{eq:PPM_original_cPL_nonsmooth} is $1$-Lipschitz continuous in $x_t$. Let $u = \pi(F, x)$ and $v = \pi(F, y)$. Since $u$ and $v$ are obtained by solving the minimization \eqref{eq:PPM_original_cPL_nonsmooth}, it holds that $0 \in \partial F(u) + \frac{1}{c}(u-x)$ and $0 \in \partial F(v) + \frac{1}{c}(v-y)$ by the first-order optimality conditions. In other words, there exist subgradients $g_u \in \partial F(u)$ and $g_v \in \partial F(v)$ such that $x - u = c g_u$ and $y - v = c g_u$. Since $F$ is closed, convex, and proper, $\partial F$ is maximally monotone \cite{ryu2016primer} and $(g_u - g_v)^\top (u - v) \geq 0$. Observe that
    \begin{align*}
        (g_u - g_v)^\top (u - v) &= \frac{1}{c} (x-u - (y-v))^\top (u-v)\\
        &= \frac{1}{c} (x-y)^\top (u-v) - \frac{1}{c} |u-v|^2 \geq 0\,.
    \end{align*}
    Therefore, we have $(x-y)^\top (u-v) \geq |u-v|^2$. On the other hand, by the Cauchy-Schwarz inequality, $|x-y||u-v| \geq (x-y)^\top (u-v)$. Combining the two last inequalities, we have $|u-v|^2 \leq |x-y||u-v|$, from which $1$-Lipschitz continuity follows. Similarly to Corollary~\ref{co:RSI}, the result then follows by combining our Theorem~\ref{th:necessity} with the linear convergence result of the proximal point method \eqref{eq:PPM_original_cPL_nonsmooth} when applied to functions $F \in \mathcal{F}_{cPL}^{\infty, \mu}$ from \cite[Theorem~4.2]{liao2024error} and the definitions of error bound and quadratic growth from \cite{karimi2016linear}.
\end{proof}

The result of Corollary~\ref{co:cPL_nonsmooth} holds for any objective $F \in \mathcal{F}_{cPL}^{\infty,\mu}$. In particular, $\mathcal{F}_{cPL}^{\infty,\mu}$ encompasses the class of optimization problems \eqref{eq:composite_opt_prob}, where $f \in \mathcal{F}_{SC}^{\beta, \mu}$ and $g \in \mathcal{F}_{C}^{\infty}$, that is, $g(x)$ is nonsmooth and convex, for which ad-hoc algorithms have been developed to exploit the structure underlying these composite  problems. Our next result focuses on the case where $g(x)$ represents the indicator function of a set of linear   constraints~\eqref{eq:constraints_of_opt_problem} to address constrained optimization problems of the form \eqref{eq:constrained_opt_prob} with $f_0 \in \mathcal{F}_{SC}^{\beta, \mu}$. 

\begin{corollary}
\label{co:constrained}
    Consider the constrained optimization problem~\eqref{eq:constrained_opt_prob} with $f_0 \in \mathcal{F}_{SC}^{\beta, \mu}$ and $f_i(x) = A_ix - b_i$ for all $i \in [1,M]$ and define the feasible set $\mathcal{X} = \{x\in \mathbb{R}^d : f_i(x) \leq 0, ~\forall i \in [1,M]\}$. Let $g(x) = \mathbb{I}_{\mathcal{X}}(x)$ and define $\mathcal{F}_{\text{comp}}$ as the set of all such functions $F(x) = f_0(x) + g(x)$. Let
    $\pi$ be the proximal gradient descent method performing the iterations
    \begin{align}
    \label{eq:base_Proj_composite}
        x_{t+1} &= \min_{x\in \mathbb{R}^d}~g(x) + \frac{1}{2}|x - (x_t - \eta \nabla f(x_t))|^2\\
        &= \operatorname{prox}_{g}(x_t - \eta \nabla f(x_t)) = \operatorname{proj}_{\mathcal{X}}(x_t - \eta \nabla f(x_t))\,,\nonumber
    \end{align}
    where $\eta \in (0, \frac{1}{\beta}]$. Consider any regular algorithm $\chi_{t+1} = \sigma_t(F, \chi_{t:0})$ with feasible iterates $\chi_t \in \mathcal{X}$ and such that $\bm{\sigma}\in \mathtt{pExp}_{\mathcal{F}_{\text{comp}}}^\pi(m,\gamma)$, with $\gamma = 1-\eta\mu$. Then, there exists $\mathbf{v} \in \ell_{exp}(m,\gamma)$ with
    \begin{equation}
    \label{eq:constraints_v}
         A_i v_t \leq b_i - A_i \operatorname{proj}_{\mathcal{X}}(x_t - \eta \nabla f(x_t))\,, ~ \forall i \in [1,M]\,,
    \end{equation}
     at all times such that the augmented algorithm
    \begin{align}
        \label{eq:augmented_Proj_composite}
        x_{t+1} &= \nu_t(F, x_{t:0})\\
        &= \operatorname{proj}_\mathcal{X}(x_t - \eta \nabla f(x_t)) + v_t(F,x_{t:0}) \nonumber\,,
    \end{align}
    is equivalent to $\bm{\sigma}$. Vice versa, for any $\mathbf{v} \in \ell_{exp}(m,\gamma)$ such that \eqref{eq:constraints_v} holds at all times and for all $i \in [1,M]$, the algorithm \eqref{eq:augmented_Proj_composite} is such that the iterates $x_t \in \mathcal{X}$ and $\bm{\nu} \in \widehat{\mathtt{pExp}}_{\mathcal{F}_{\text{comp}}}^\pi(\gamma)$.
\end{corollary}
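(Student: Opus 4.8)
The plan is to reduce the statement to a direct application of Theorem~\ref{th:necessity}, in the same spirit as the proofs of Corollary~\ref{co:RSI} and Corollary~\ref{co:cPL_nonsmooth}, the only extra work being the translation between feasibility of the iterates and the polyhedral constraint~\eqref{eq:constraints_v} on the augmentation. To invoke Theorem~\ref{th:necessity} I would first verify its two hypotheses for the baseline $\pi(F,x)=\operatorname{proj}_{\mathcal{X}}(x-\eta\nabla f(x))$ over the class $\mathcal{F}_{\text{comp}}$. By the classical estimate for $\beta$-smooth $\mu$-strongly convex functions, the gradient step $x\mapsto x-\eta\nabla f(x)$ is a $(1-\eta\mu)$-contraction whenever $\eta\in(0,1/\beta]$; composing it with the nonexpansive Euclidean projection onto the nonempty closed convex set $\mathcal{X}$ shows that $\pi(F,\cdot)$ is itself a $\gamma$-contraction with $\gamma=1-\eta\mu$. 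In particular $\pi(F,\cdot)$ is Lipschitz continuous in its second argument, and by the Banach fixed-point theorem — together with nonemptiness of $\mathcal{X}^\star$ — it has a unique fixed point $x^\star$, which coincides with the constrained minimizer by the standard variational characterization of projected-gradient fixed points; moreover every orbit obeys $|\pi(F,x)-x^\star|\le\gamma|x-x^\star|$. Hence $\pi\in\Exp{\gamma}$ uniformly over $\mathcal{F}_{\text{comp}}$, with a constant polynomial factor.

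For the forward direction, let $\bm{\sigma}\in\mathtt{pExp}^\pi_{\mathcal{F}_{\text{comp}}}(m,\gamma)$ be regular with feasible iterates $\chi_t\in\mathcal{X}$. Theorem~\ref{th:necessity} then produces $\mathbf{v}\in\ell_{exp}(m,\gamma)$ — explicitly, $v_t=\chi_{t+1}-\operatorname{proj}_{\mathcal{X}}(\chi_t-\eta\nabla f(\chi_t))$ — such that the iterates of~\eqref{eq:augmented_Proj_composite} initialized at $x_0=\chi_0$ coincide with those of $\bm{\sigma}$. It only remains to observe that feasibility $\chi_{t+1}\in\mathcal{X}$, i.e. $A_i\chi_{t+1}\le b_i$ for every $i$, becomes exactly~\eqref{eq:constraints_v} after substituting $\chi_{t+1}=\operatorname{proj}_{\mathcal{X}}(\chi_t-\eta\nabla f(\chi_t))+v_t$ (with $x_t=\chi_t$) and rearranging.

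For the converse, fix any $\mathbf{v}\in\ell_{exp}(m,\gamma)$ satisfying~\eqref{eq:constraints_v} at all times and for all $i$. Multiplying~\eqref{eq:augmented_Proj_composite} on the left by $A_i$ and applying~\eqref{eq:constraints_v} gives $A_i x_{t+1}\le A_i\operatorname{proj}_{\mathcal{X}}(x_t-\eta\nabla f(x_t))+b_i-A_i\operatorname{proj}_{\mathcal{X}}(x_t-\eta\nabla f(x_t))=b_i$ for every $t\ge 0$ and every $i$, so $x_{t+1}\in\mathcal{X}$; feasibility of the iterates is thus automatic. Since $\pi\in\Exp{\gamma}$ and $\mathbf{v}\in\ell_{exp}(m,\gamma)$, the last assertion of Theorem~\ref{th:necessity} then yields $\bm{\nu}\in\widehat{\mathtt{pExp}}^\pi_{\mathcal{F}_{\text{comp}}}(\gamma)$, which closes the argument.

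I expect the only step requiring genuine care to be the verification that $\pi\in\Exp{\gamma}$: one must check that the relevant distance is to the singleton $\Fix{\pi}=\{x^\star\}$, that this point is precisely the constrained optimizer (so that nonemptiness of $\mathcal{X}^\star$ is what guarantees $\Fix{\pi}\neq\emptyset$), and that the contraction constant of the gradient map collapses to $1-\eta\mu$ exactly under the step-size restriction $\eta\le 1/\beta$ — this is where the strong convexity of $f_0$ and the bound $\eta(\beta+\mu)\le 2$ are used. Everything else is a transparent rewriting of the feasibility requirement as the linear inequalities~\eqref{eq:constraints_v} and an invocation of Theorem~\ref{th:necessity}.
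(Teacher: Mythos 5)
Your proposal is correct and follows essentially the same route as the paper: verify that the projected-gradient baseline lies in $\Exp{\gamma}$ with $\gamma=1-\eta\mu$ and is Lipschitz, take $v_t=\sigma_t(F,\chi_{t:0})-\pi(F,\chi_t)$ for the forward direction with the feasibility of $\chi_{t+1}$ rewritten as \eqref{eq:constraints_v}, and obtain the converse by substituting \eqref{eq:constraints_v} into $A_i x_{t+1}$ and invoking the preservation result. The only difference is that you establish $\pi\in\Exp{\gamma}$ from first principles (the gradient map is a $(1-\eta\mu)$-contraction for $\eta\le 1/\beta\le 2/(\mu+\beta)$, composed with the nonexpansive projection), whereas the paper simply cites \cite[Theorem~11.5]{garrigos2023handbook} — a harmless, indeed more self-contained, substitution.
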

\begin{proof}
The baseline algorithm $\pi$ defined in \eqref{eq:base_Proj_composite} is such that $\pi\in \mathtt{Exp}_{\mathcal{F}_{\text{comp}}}^\pi(1-\eta\mu)$ as shown in \cite[ Theorem~11.5]{garrigos2023handbook}. Now, consider any regular algorithm $\chi_{t+1} = \sigma_t(F, \chi_{t:0})$ with feasible iterates $\chi_t \in \mathcal{X}$ and such that $\bm{\sigma}\in \mathtt{pExp}_{\mathcal{F}_{\text{comp}}}^\pi(m,\gamma)$, with $\gamma = 1-\eta\mu$. Its iterates are equivalent to those of \eqref{eq:augmented_Proj_composite} by choosing
\begin{equation}
    v_t = -\pi(F,\chi_t)+\sigma_t(F,\chi_{t:0})\,, \quad x_0=\chi_0\,.
\end{equation}
Next, we verify that $A_i v_t \leq b_i - A_i \operatorname{proj}_{\mathcal{X}}(x_t - \eta \nabla f(x_t))$ for all $i \in [1,M]$. We have
\begin{align}
    A_iv_t&=-A_i\pi(F,\chi_t)+A_i\sigma_t(F,\chi_{t:0})\nonumber\\
    &=-A_i\operatorname{proj}_\mathcal{X}\left(\chi_t-\eta \nabla f(\chi_t)\right)+A_i \chi_{t+1}\,.\label{eq:condition_feasibility_vt}
\end{align}
By definition, $A_i \chi_{t+1} \leq b_i$ because $\chi_{t+1}$ lies in $\mathcal{X}$ and the claim follows by direct substitution in \eqref{eq:condition_feasibility_vt} since the sequence $\chi_t$ is equal to $x_t$. Next, we verify that $\pi(F,x)$ is Lipschitz in $x$. Since the projection onto an affine subspace is $1$-Lipschitz, it holds that
\begin{align*}
    |\pi(F,x)-\pi(F,y)|&\leq |x-y+\eta \nabla f(y)-\eta \nabla f(x)|\\
    &\leq (1+\eta \beta)|x-y|\,.
\end{align*}
Last, analogously to the proof of Theorem~\ref{th:necessity}, it holds that $\mathbf{v} \in \ell_{exp}(m,1-\eta \mu)$ because $\bm{\sigma}$ is regular.  Vice versa, if $\mathbf{v} \in \ell_{exp}(m,\gamma)$ is such that at all times $A_i v_t \leq b_i - A_i \operatorname{proj}_{\mathcal{X}}(x_t - \eta \nabla f(x_t))$ for all $i \in [1,M]$, the iterates of \eqref{eq:augmented_Proj_composite} are feasible because 
\begin{align}
    A_i x_{t+1} = A_i  \operatorname{proj}_{\mathcal{X}}(x_t - \eta \nabla f(x_t)) + A_i v_t \leq b_i\,,
\end{align}
and $\bm{\nu}\in \widehat{\mathtt{pExp}}_{\mathcal{F}_{\text{comp}}}^\pi(1-\eta \mu)$ by Theorem~\ref{th:sufficiency_polyexp_general}.
\end{proof}

Leveraging the composite structure of \eqref{eq:constrained_opt_prob}, Corollary~\ref{co:constrained} addresses the requirement of ensuring feasibility of all iterates of \eqref{eq:augmented_Proj_composite} in optimization problems with polytopic constraints. In fact, while Corollary~\ref{co:cPL_nonsmooth} guarantees convergence rates of the augmented algorithm $\bm{\nu}$, feasibility of iterates $x_t$ of \eqref{eq:augmented_PPM_cPL_nonsmooth} may be lost for arbitrary choices of $\mathbf{v} \in \ell_{exp}(m, \gamma)$.  

\section{Learning to optimize with linear convergence guarantees}
\label{sec:numerical}
In this section, we illustrate a natural application of our characterization of linearly convergent algorithms to design new update rules with improved average‑case performance relative to a baseline linearly convergent algorithm while preserving its worst‑case convergence rate. We begin by formulating a meta‑optimization problem over the augmentation sequence $v_t(F, \xi_{t:0})$, where we aim at minimizing the expected cost incurred by the augmented algorithm under a given data distribution while enforcing a target worst‑case convergence rate. Beyond the special case in which the algorithmic costs are quadratic in the iterates, the augmentation sequences are linear functions, and the objective functions $F \in \mathcal{F}$ are quadratic, the meta‑optimization is nonconvex. This motivates parameterizing $v_t(F, \xi_{t:0})$ as an exponentially decaying NN update and learning its parameters via empirical algorithmic cost minimization. We conclude by discussing the significant speed-ups achieved by learned augmentations through examples drawn from linear regression and linear quadratic MPC.\footnote{The Python source code that reproduces our experiments is publicly available at \href{https://github.com/andrea-martin/LinearlyConvergentL2O}{https://github.com/andrea-martin/LinearlyConvergentL2O}.}

\subsection{Augmenting average-case performance of a baseline algorithm through NN updates}
Let $\mathcal{D}_\mathcal{F}$ denote a distribution of objective functions $F \in \mathcal{F}$ and let $\pi \in \pExp{m, \gamma}$ be a known linearly convergent baseline algorithm for $\mathcal{F}$. The problem of designing an augmentation signal $v_t$ to enhance the average-case performance of $\pi$ relative to $\mathcal{D}_\mathcal{F}$ is expressed as
\begin{subequations}
\label{eq:metaopt_constrained_problem}
    \begin{alignat}{3}
   &~\min_{\mathcal{D}_\theta} && \mathbb{E}_{\theta \sim \mathcal{D}_\theta} ~ \mathbb{E}_{\substack{F \sim \mathcal{D}_\mathcal{F}\\ \xi_0 \sim \mathcal{D}_{0}}} \left[ \mathtt{AlgoCost}(F, \bm{\xi}) \right] \label{eq:expected_meta_loss} \\
   &\st && \quad \xi_{t+1} = \xi_t + \pi(F,\xi_{t})+v_t(F, \xi_{t:0}, \theta),\quad \label{eq:algo_def_metaloss_problem} \\
   &~&& \quad \bm{\xi} \in \ell_{exp}(\overline{m},\overline{\gamma}), ~ \forall F \in \mathcal{F} \label{eq:constraint_convergence}\,,%
\end{alignat}
\end{subequations}
where $\mathcal{D}_\theta$ and $\mathcal{D}_0$ represent distributions over the parameters $\theta \in \mathbb{R}^q$ and the initial conditions $\xi_0 \in \Xi_0 \subseteq \mathbb{R}^n$, respectively, $\overline{m}\geq m$ and $\overline{\gamma} \in [\gamma,1)$ in \eqref{eq:constraint_convergence} specify a target  convergence rate over all functions $F \in \mathcal{F}$, and $\mathtt{AlgoCost}(\cdot)$ measures the cost incurred by the algorithm \eqref{eq:algo_def_metaloss_problem} in optimizing a function $F \in \mathcal{F}$ when starting from an initial condition $\xi_0 \in \Xi_0$. We refer to \cite{andrychowicz2016learning,li2017learning} for commonly used algorithm performance metrics.  %

To cast \eqref{eq:metaopt_constrained_problem} as a finite-dimensional learning problem, we perform a number of standard approximations. First, we parametrize classes of distributions $\mathcal{D}_\theta(\psi)$, where $\psi \in \mathbb{R}^{q_\psi}$ is a new set of trainable parameters, that is, we let $\theta \sim \mathcal{D}_\theta(\psi)$. For instance, one can parametrize the mean and variance of Gaussian distributions. Second, we approximate the expectations in \eqref{eq:expected_meta_loss} with their sample-average counterparts
\begin{equation}
    \label{eq:sample_average_approximations}
    \frac{1}{N_\theta N_F N_0} \sum_{i = 1}^{N_\theta} ~ \sum_{j = 1}^{N_F} ~ \sum_{k = 1}^{N_0} ~ \mathtt{AlgoCost}(F_j, \bm{\xi}(\theta_i,{\xi_0}_k))\,,
\end{equation}
on the sets $\hat{\mathcal{F}} = \{F_1, \dots, F_{N_F}\}$, $\hat{\Xi}_{0} = \{\Xi_{0_1}, \dots, \Xi_{0_{N_{0}}}\}$, and $\hat{\Theta} = \{\theta_1, \dots, \theta_{N_\theta}\}$ with elements independently sampled from the distributions $\mathcal{D}_F$, $\mathcal{D}_0$, and $\mathcal{D}_\theta$, respectively. These two approximations allow us to evaluate the empirical algorithm performance explicitly as per \eqref{eq:sample_average_approximations}, and backpropagate through the parameters $\psi$.
\begin{remark}
     The by-design convergence guarantee \eqref{eq:constraint_convergence} ensures that the trajectories $\bm{\xi}$ never diverge, and therefore, commonly chosen algorithmic costs \cite{andrychowicz2016learning} admit a uniform upperbound $\operatorname{AlgoCost}(\cdot)\leq C$ over any bounded sets of parameters $(\theta,F,\xi_0)$, without wrapping the values between intervals defined a-priori \cite{sambharya2024data}. Therefore, the use of stochastic algorithm augmentations enables a direct application of PAC-Bayes generalization bounds \cite{alquier2021user} as follows. Given a prior distribution of augmentation parameters $\theta \sim \mathcal{D}_\theta(\psi_0)$ with compact support,
    and a trained posterior distribution of augmentation parameters $\theta \sim \mathcal{D}_\theta(\psi^\star)$ with compact support,
    one has by \cite[Theorem 2.1]{alquier2021user} that, for any $\lambda > 0$ and $\delta \in (0,1)$,
    \begin{align}
    \label{eq:pac_bayes_bounds}
        \mathbb{P}\Bigg(
            &\mathbb{E}_{\theta \sim \mathcal{D}_\theta(\psi^\star)}[R(\theta)] \leq 
            \mathbb{E}_{\theta \sim \mathcal{D}_\theta(\psi^\star)}[r(\theta)] 
            + \frac{\lambda C^2}{8 \,(N_F + N_0)} \nonumber \\
            &~ + \! \frac{ \mathrm{KL}\!\left(\mathcal{D}_\theta(\psi^\star)\,\middle\|\,\mathcal{D}_\theta(\psi_0)\right) 
            + \log\!\left(\tfrac{1}{\delta}\right)}{\lambda}
        \Bigg) \geq 1-\delta\,,
    \end{align}
    where $R(\theta) \;=\; 
    \mathbb{E}_{\substack{F \sim \mathcal{D}_{\mathcal{F}}\\ \xi_0 \sim \mathcal{D}_{0}}} 
    \big[\, \mathtt{AlgoCost}(F, \bm{\xi}) \,\big],
    $ and $
    r(\theta) \;=\; \frac{1}{N_F N_0} \sum_{j = 1}^{N_F} \sum_{k = 1}^{N_0} 
    \mathtt{AlgoCost}\!\big(F_j, \bm{\xi}(\theta,{\xi_0}_k)\big)$ is its empiric value, and $\mathrm{KL}(\cdot\!\parallel\!\cdot)$ denotes the Kullback--Leibler divergence. The bound holds for every set of samples $\{F_j\}_{j=1}^{N_F}$ and $\{{\xi_0}_k\}_{k=1}^{N_0}$ obtained by sampling from 
    $\mathcal{D}_{\mathcal{F}}$ and $\mathcal{D}_0$, respectively (i.e., for any realization of these draws). The inequality \eqref{eq:pac_bayes_bounds} enables quantifying how much the performance of the algorithm—trained over a set of problems $\hat{\mathcal{F}}$ and $\hat{\Xi}_0$—generalizes to future instances $(F,\xi_0) \sim (\mathcal{D}_\mathcal{F},\mathcal{D}_0)$. We refer the interested reader to more advanced PAC-Bayes bounds \cite{alquier2021user} and their specialization to learned optimization \cite{sambharya2024data} for allowing optimal tuning of the quantities in \eqref{eq:pac_bayes_bounds} as part of the learning procedure.
\end{remark}
Last, we parametrize exponentially decaying augmentations $v_t(F, \xi_{t:0}, \theta)$ such that \eqref{eq:constraint_convergence} holds for any $\theta \in \mathbb{R}^q$. To achieve this, we write $v_t$ as the product of a trainable exponentially decaying sequence of scalars and a NN operator encoding the direction of $v_t$. Specifically, we let $\theta = (\alpha, \varphi)$, where $\alpha \in \mathbb{R}^{\overline{m}+1}_+$ are the coefficients of a polynomial and $\varphi \in \mathbb{R}^{r}$ are the weights of a NN, and let
\begin{align}
\label{eq:MD_form}
    &v_t= \left(\sum_{s=0}^{\overline{m}} \alpha_s t^{s} \right)\overline{\gamma}^t \operatorname{tanh}(D_t(F(\xi_{t:0}),\nabla  F(\xi_{t:0}), \xi_{t:0}, \varphi))\,,
\end{align}
where $D_t$ is chosen as the output of a long short-term memory (LSTM) network with 2 layers. For ease of implementation, in the examples below we consider deterministic algorithm updates by letting $\theta$ be drawn from a Dirac distribution with trainable mean. When dealing with polyhedral constraints as per  Corollary~\ref{co:constrained}, we use Agmon's iterative method \cite{agmon1954relaxation} to enforce \eqref{eq:constraints_v} after generating $v_t$ through \eqref{eq:MD_form}.

\subsection{Augmenting NAG for ill-conditioned systems of linear equations}
In our first example, we consider the problem of solving systems of linear equations of the form $Ax = b$, where $A \succ 0$ and $b$ are sampled from a joint distribution $\mathcal{D}_{A,b}$. Although each instance of this problem admits the analytical solution $x^\star = A^{-1} b$, directly computing the matrix inverse becomes numerically unstable when the condition number $\kappa(A)$ of $A$ grows very large. To address this, we instead approximate $x^\star$ using iterative methods by solving the equivalent quadratic program:
\begin{equation}
    \label{eq:linear_regression_qp}
    \min_{x \in \mathbb{R}^d} ~|A x-b|^2 = \min_{x \in \mathbb{R}^d} ~ x^\top A^\top A x -2b^\top Ax + b^\top b\,.
\end{equation}

For our experiment, we assume that $A = \hat{A} + \delta_A$, where $\hat{A}$ corresponds to the matrix $\mathtt{bcsstk02}$ in the dataset \cite{davis2011university} and is such that $\lambda_{\operatorname{min}}(\hat{A}^\top\hat{A}) \approx 17.75$,  $\kappa(\hat{A}^\top\hat{A}) \approx 1.17\times 10^8$, and $\delta_A$ is a matrix with entries drawn from a standard Gaussian distribution. Similarly, we assume that $b = \hat{b} + \delta_b$, where $\hat{b}$ is a vector with entries $\hat{b}_i = 0.5$ and $\delta_b$ is a vector with entries drawn from a zero-mean Gaussian random variable with standard deviation $0.2$.

We first solve the optimization problem \eqref{eq:linear_regression_qp} using standard gradient descent (GD) and NAG methods, with step-size and momentum chosen according to the optimal tuning for quadratic functions given in \cite[Proposition~1]{lessard2016analysis}. Using the latter method as a baseline optimizer, we then train a two-layer LSTM to learn an augmentation term $v_t$ improving the resulting empirical performance \eqref{eq:sample_average_approximations} over a dataset $\mathcal{F}_{\text{train}}$ of $1024$ realizations of $A$ and $b$. Specifically, we pick the algorithmic cost function in \eqref{eq:metaopt_constrained_problem} as $\mathtt{AlgoCost}(A,b,\mathbf{x})=\sum_{t=0}^T ~|Ax_t - b|^2$ for $T=10000$, and perform meta-optimization  using Adam with a learning rate of $10^{-3}$ for $100$ epochs.

We then construct a test dataset by sampling $256$ independent realizations of $A$ and $b$ and compare the average-case performance of our learned optimizer against standard methods in solving \eqref{fig:linear_regression} in Figure~\ref{fig:linear_regression}. As expected, the introduction of a momentum term enables NAG to converge significantly faster than standard gradient descent. Nevertheless, both methods still require a large number of iterations to reach solutions with high accuracy. Remarkably, we observe that our linearly convergent L2O method learns to initially follow the direction of the \emph{positive} gradient, therefore initially increasing the cost function rather than decreasing it. This behavior does not pertain to classical optimizers and effectively accelerates the accumulation of momentum in the early stages. As demonstrated by Figure~\ref{fig:linear_regression}, this learned behavior results in improved transient performance, without affecting the asymptotic convergence rate.
\begin{figure}[t]
    \centering
    \includegraphics[width=\columnwidth]{./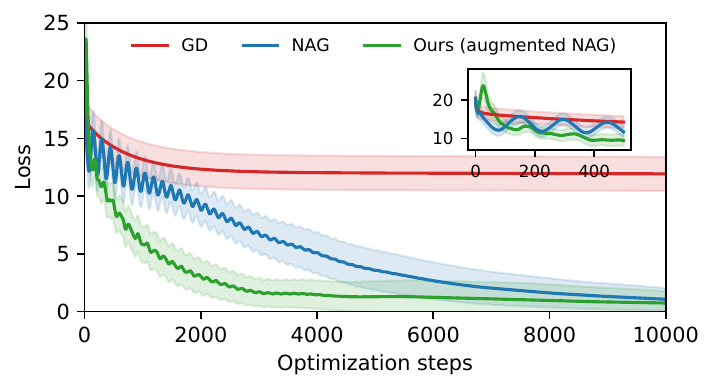}
    \caption{Comparison between the average-case performance of classical and learned optimizers in solving the linear regression problem \eqref{eq:linear_regression_qp}; shaded areas and solid lines denote standard deviations and mean values, respectively.}
    \label{fig:linear_regression}
\end{figure}

Finally, to assess the impact of our linear convergence guarantees, we repeat the training of a 2-layer LSTM optimizer without enforcing an exponential decay of the augmentations $v_t$ in \eqref{eq:optimization_algorithms_fixed_point_plus_v}. We observe that, without clipping the magnitude of $v_t$ as per \eqref{eq:MD_form}, the training procedure becomes highly sensitive to the initial parameter choice $\theta_0$. In fact, since the matrix $A^\top A$ is ill-conditioned, the algorithmic cost function $\mathtt{AlgoCost}(A,b,\mathbf{x})$ dramatically increases if the algorithm updates are poorly chosen, causing numerical instability problems when performing backpropagation to update the parameters $\theta$. As a result, the lack of formal guarantees makes training algorithms to perform a large number of optimization steps numerically challenging. In this example, we thus consider a number of optimization steps of $T = 30$. To make things worse, Figure~\ref{fig:linear_regression_lstm} showcase how learned optimizers may diverge  when unrolled for more optimization steps than they were during training—even when training and test problems are drawn from the same distribution. These observations highlight that restricting our search space to linearly convergent algorithms yield not only theoretical but also practical computational advantages. 

\begin{figure}[t]
    \centering
    \includegraphics[width=\columnwidth]{./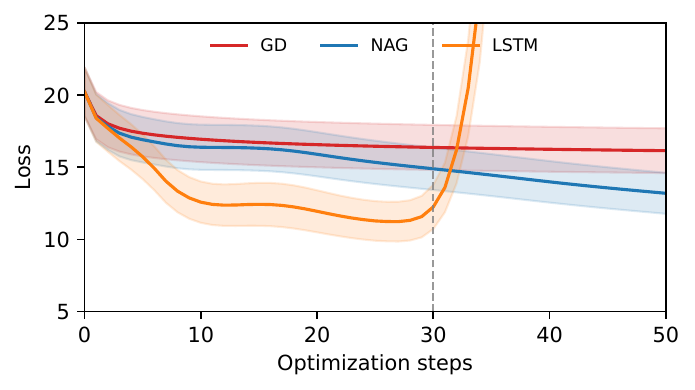}
    \caption{Comparison between the average-case performance of classical and learned optimizers in solving the linear regression problem \eqref{eq:linear_regression_qp}; without convergence guarantees, learned optimizers may diverge when unrolled for more steps than they were during training (vertical dashed line).}
    \label{fig:linear_regression_lstm}
\end{figure}

\subsection{Augmenting projected gradient descent for MPC}
In our second example, we consider a model predictive control setting where only a limited number of optimization steps can be executed in real-time. In particular, we study a discrete-time linear dynamical system described by the state space equation
\begin{equation}
    \label{eq:lti_dynamics_mpc}
    x_{t+1} = A x_t + Bu_t + w_t = \begin{bmatrix}
        1 & 1\\ 0 & 1
    \end{bmatrix} x_t + \begin{bmatrix}
        0\\1
    \end{bmatrix} u_t + w_t\,,
\end{equation}
where $x_t \in \mathbb{R}^2$ is the system state, $u_t \in \mathbb{R}$ is the control input, and $w_t \in \mathbb{R}^2$ represents a zero-mean Gaussian process noise term. 

The goal at each time step is to solve the following finite-horizon linear quadratic control problem:
\begin{subequations}    
\label{eq:mpc_control_problem}
\begin{alignat}{3}
    &~\min_{u_0, \dots, u_{T-1}} ~ &&\sum_{t=0}^{T-1} x_t^\top Q x_t + u_t^\top R u_t + x_T^\top Q_T x_T\\
    &\st ~ &&x_{t+1} = A x_t + Bu_t\,,\\
    & &&u_t \in \mathcal{U}_t\,, x_t \in \mathcal{X}\,, x_0 \in \mathbb{R}^2\,;
\end{alignat}
\end{subequations}
for simplicity, we assume that the weighting matrices $Q, R$ and $Q_T$ are identity matrices of appropriate dimensions. We further set $\mathcal{U}_t = \{u : ||u||_\infty \leq 0.25\}$ and $\mathcal{X}_t = \mathbb{R}^2$ to account for actuation constraints yet sidestep recursive feasibility issues for ease of exposition, as our focus lies in efficiently solving the underlying quadratic program.

By introducing the stacked notation $\mathbf{u} = [u_0^\top \dots u_{T-1}^\top]^\top$ and $\mathbf{x} = [x_0^\top \dots x_T^\top]^\top$, \eqref{eq:mpc_control_problem} can be equivalently rewritten as:
\begin{subequations}
\label{eq:mpc_qp}
\begin{align}
    &~\min_{\mathbf{u}} ~ \mathbf{u}^\top (\mathbf{G}^\top \mathbf{Q} \mathbf{G} + \mathbf{R}) \mathbf{u} + 2 x_0^\top \mathbf{F}^\top \mathbf{Q} \mathbf{G} \mathbf{u} + x_0^\top \mathbf{F}^\top \mathbf{Q} \mathbf{F} x_0 \label{eq:mpc_qp_objective} \\
    &\st ~ \mathbf{x} = \mathbf{F} x_0 + \mathbf{G} \mathbf{u}, \mathbf{u} \in \boldsymbol{\mathcal{U}}, x_0 \in \mathbb{R}^2 \label{eq:mpc_qp_constraints}\,,
\end{align}
\end{subequations}
where $ \mathbf{Q} = \operatorname{blkdiag}(I_T \otimes Q, Q_T)$, $\mathbf{R} = I_T \otimes R$, and $ \mathbf{F} $ and $ \mathbf{G} $ are block matrices comprising the system Markov parameters to encode the system dynamics over a prediction horizon of length $T = 20$.

To solve the quadratic program above, we start from an initial guess $\mathbf{u}^{(0)}$ equal to zero and employ the projected gradient descent (PGD) method \eqref{eq:base_Proj_composite} with step size $\eta = \frac{1}{\lambda_{\operatorname{max}}(\mathbf{G}^\top \mathbf{Q} \mathbf{G} + \mathbf{R})} \approx 3.8 \cdot 10^{-5}$. Using this as our baseline optimization algorithm, we then learn an augmentation term $v_t$ parametrized as per \eqref{eq:MD_form} to minimize the total average cost \eqref{eq:mpc_qp_objective} over $100$ optimization steps $\mathbf{u}^{(1)}, \dots, \mathbf{u}^{(100)}$ when each component of the initial state $x_0$ of the system \eqref{eq:lti_dynamics_mpc} is drawn from a zero-mean Gaussian random variable with standard deviation of $0.5$. Specifically, we perform meta-optimization using Adam with a learning rate of $5 \cdot 10^{-3}$ for $65$ epochs. 
\begin{figure}[t]
    \centering
    \includegraphics[width=\columnwidth]{./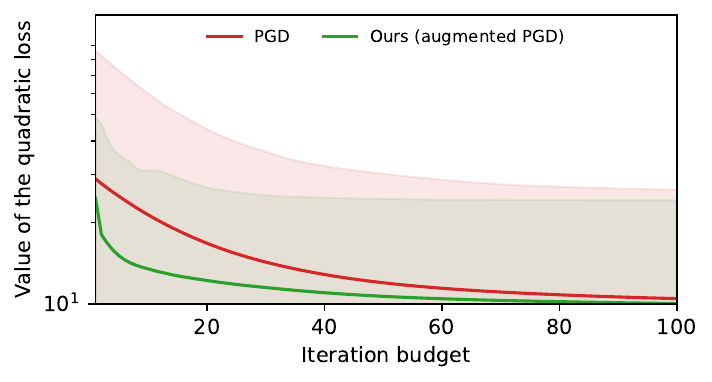}
    \caption{Comparison between the average-case performance of classical and learned optimizers in solving the quadratic program \eqref{eq:mpc_qp}; solid lines represent mean values, and shaded areas cover values up to the $90\%$ percentile.}
    \label{fig:random_qp_x0_gaussian}
\end{figure}
As highlighted by Figure~\ref{fig:random_qp_x0_gaussian}, while both algorithms converge in a small neighborhood of the solution of \eqref{eq:mpc_qp} after $100$ iterations, our learned optimizer showcase improved transient performance. Numerically, we verify that this is because during training $v_t$ learns to promptly saturate the constraint $u_t \in \mathcal{U}_t$ when needed, leading to better performance in the testing phase. Remarkably, as shown in Figure~\ref{fig:random_qp_closed_loop_mpc}, we observe a similar trend even when we close the loop between the linear dynamical system \eqref{eq:lti_dynamics_mpc} and our learned optimizer. In particular, despite the repeated solution of \eqref{eq:mpc_qp} in a receding horizon fashion induces a new distribution over the state space, using our learned optimizer—trained on initial conditions $x_0$ randomly drawn from a Gaussian distribution—enables a significant reduction of the closed-loop cost incurred over an horizon of length $N =30$ when only a limited optimization steps can be performed in real-time.
\begin{figure}[t]
    \centering
    \includegraphics[width=\columnwidth]{./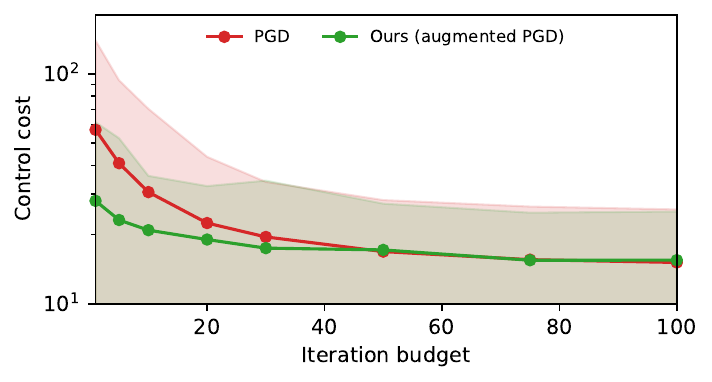}
    \caption{Comparison between the closed-loop cost incurred by the receding horizon control law that results from unrolling classical and learned optimizer to approximate the solution of the quadratic program \eqref{eq:mpc_qp}; solid lines represent mean values, and shaded areas cover values up to the $90\%$ percentile.}
    \label{fig:random_qp_closed_loop_mpc}
\end{figure}

\section{Conclusion}
We have shown that many classical linearly convergent algorithms—ranging from standard gradient descent to accelerated and proximal methods—can be enriched with an exponentially-decaying augmentation term whose  role is to improve average‐case performance on a specified distribution of problems. Crucially, this can be achieved without sacrificing their provable worst‐case linear rates. Specifically, (i) these perturbations can be injected at any desired frequency below a threshold, trading off average-case improvement against worst-case rate degradation in a quantifiable way; and (ii) every regular linearly convergent method admits this form of decomposition.  In practice, these results enable average-case improvement of baseline algorithms for composite optimization, by designing an exponentially decaying update function akin to designing a robustly stabilizing feedback policy for nonlinear control systems. Numerical examples confirm  the potential for significant reduction in the number of optimization steps over general-purpose solvers.

While we have focused on application to learned optimization and average-case improvement, the characterization of all linearly convergent algorithms has independent interest. Accordingly, one direction of interest is to exploit the characterization to deriving update rules with optimal performance from the lens of typical control costs, drawing novel parallels between optimal control theory and accelerated optimization. Important venues of further investigation include analyzing the computational overhead arising from computing learned augmentations, guarantees for derivative-free optimization, design of nonexpansive fixed-point iterations, and applications to time-varying and online optimization.

\bibliographystyle{plain}        %
\bibliography{references}           %

\appendix
\appendix

\section{Proof of Theorem~\ref{th:sufficiency_polyexp_general}}
\label{app:proof_sufficiency_polyexp_general}
We first prove the result by assuming that $\pi \in \Exp{\gamma}$. This is instrumental towards establishing the general result. Let $\delta_t = \operatorname{dist}(\xi_t, \Fix{\pi})$ for compactness. By the algorithm definition \eqref{eq:optimization_algorithms_fixed_point_plus_v} and the triangle inequality, we have that for every $F \in \mathcal{F}$ 
\begin{align*}
    \delta_{t} &= \operatorname{dist}(\pi(F,\xi_{t-1}) + v_{t-1}, \Fix{\pi})\\
    &= \inf_{c \in \Fix{\pi}} ~ \operatorname{dist}(\pi(F,\xi_{t-1}) + v_{t-1}, c)\\
    &\leq \inf_{c \in \Fix{\pi}} ~ \operatorname{dist}(\pi(F,\xi_{t-1}), c) + |v_{t-1}|\\
    &= \operatorname{dist}(\pi(F,\xi_{t-1}), \Fix{\pi}) + |v_{t-1}|\,.
\end{align*}
Assuming that $\pi \in \Exp{\gamma}$, we have that \eqref{eq:exponential_convergence} holds with $p(t) = 1$. It follows that $\delta_{t} \leq \gamma \delta_{t-1} + |v_{t-1}|$. Iterating this inequality, we deduce that
\begin{align*}
    \delta_{t} &\leq \gamma^{t} \delta_0 + \sum_{k=0}^{t-1} \gamma^k |v_{t-1-k}|\\
    &\leq \gamma^{t} \delta_0 + \sum_{k=0}^{t-1} \gamma^k p(t-1-k) \gamma^{t-1-k}\\
    &\leq \gamma^{t} \left(\delta_0 + \frac{1}{\gamma} \sum_{k=0}^{t-1} p(k)\right)\,,
\end{align*}
where we used the fact that $\mathbf{v} \in \ell_{exp}(m, \gamma)$. Let $q(t) = \sum_{k=0}^t p(k)$ and note that the right-hand side of the above can be written as $\gamma^t r(t)$ where $r(t) = \delta_0 + \frac{1}{\gamma} q(t-1)$. We study $q(t)$. By linearity of summation, $q(t)$ can be equivalently rewritten as
\begin{equation*}
    \sum_{k=0}^{t} \sum_{j = 0}^m a_j k^j = a_m \sum_{k = 0}^{t} k^m + \dots + a_1 \sum_{k = 0}^{t} k + a_0 \sum_{k = 0}^{t} 1\,,
\end{equation*}
where $a_j \in \mathbb{R}$ with $j \in \{0, \dots, m\}$ is the $j$-th coefficient of the polynomial $p(\cdot)$. Faulhaber's formula implies that $q(t)$ is a polynomial of degree $m+1$ in the variable $t$ with coefficient $b_{m+1} = \frac{a_m}{m+1}$. Furthermore, $q(t)$ is positive and monotonically non-decreasing by construction, that is, $q(t) \in \mathcal{P}_{m+1}(t)$. Note that $q(t) \in \mathcal{P}_{m+1}(t)$ implies $r(t) \in \mathcal{P}_{m+1}(t)$. Hence, we conclude that $\delta_{t} \leq r(t) \gamma^{t}$ for all $t \in \mathbb{N}$, which proves the result for the case $\pi \in \Exp{\gamma}$.

We now turn our attention to the general case where $\pi$ is any linearly convergent algorithm in  $\pExp{m,\gamma}$ certified by a polynomial $p(t) \in \mathcal{P}_m(t)$. We distinguish two cases. In the case $p(1) \gamma < 1$, we have that $\pi \in \pExp{m,\gamma}$ also lies in $\Exp{p(1) \gamma}$, and the analysis above applies with rate $p(1) \gamma$ in place of $\gamma$. 

In the case $p(1) \gamma \geq 1$, we necessarily have than $N \geq 2$. For any $w_t \in \ell_{exp}(m,\gamma)$, consider the recursion
\begin{equation}
    \label{eq:optimization_algorithms_fixed_point_plus_v_stacked}
    \zeta_{k+1} = \pi^{N}(F, \zeta_k) + w_t\,,
\end{equation}
where $\pi^N(\cdot)$ denotes the repeated application of $\pi$ defined as $\pi^{N}(F, \zeta_k) = \pi(F, \pi^{N-1}(F, \zeta_k))$ with $\pi^1(F, \zeta_k) = \pi(F, \zeta_k)$. 

We first observe that, if $\zeta_0 = \xi_0$ and $\mathbf{v}$ is constructed as per \eqref{eq:v_construction_zeros}, then \eqref{eq:optimization_algorithms_fixed_point_plus_v_stacked} is equivalent to  \eqref{eq:optimization_algorithms_fixed_point_plus_v} in the sense that  $\zeta_{k} = \xi_{Nk}$ for every $k \in \mathbb{N}$. By construction, $\pi^{N} \in \Exp{\rho}$ and therefore complies with \eqref{eq:exponential_convergence} with $p(t) =1$ and $\gamma=\rho$. Hence, as proven above, it holds that
\begin{equation*}
    \operatorname{dist}(\bm{\zeta}, \Fix{\pi^N}) \in \ell_{exp}(m+1, \rho)\,.
\end{equation*}
We now argue that $\Fix{\pi^N} = \Fix{\pi}$. Clearly, $\Fix{\pi^N} \supseteq \Fix{\pi}$ since $\xi^{\star} = \pi(F, \xi^{\star})$ for every $\xi^{\star} \in \Fix{\pi}$ and thus $\xi^{\star} = \pi^N(F, \xi^{\star})$. To show that $\Fix{\pi^N} \subseteq \Fix{\pi}$, assume there exists $\zeta^\star \in \Fix{\pi^N}$ such that $\zeta^\star \notin \Fix{\pi}$. Since $\pi \in \pExp{m,\gamma}$, we have that $\lim_{t \to \infty} ~ \operatorname{dist}(\pi^t(F,\zeta^\star), \Fix{\pi}) = 0$. At the same time, $\operatorname{dist}(\pi^{\tau N}(F,\zeta^\star), \Fix{\pi}) > 0$ for any $\tau \in \mathbb{N}$ because $\pi^{\tau N}(F,\zeta^\star) = \zeta^\star \notin \Fix{\pi}$. This is a contradiction, and thus $\Fix{\pi^N} = \Fix{\pi}$. We conclude that 
\begin{equation*}
    \operatorname{dist}(\bm{\zeta}, \Fix{\pi}) \in \ell_{exp}(m+1, \rho)\,,
\end{equation*}
and therefore there exists a polynomial $q(k) \in \mathcal{P}_{m+1}(k)$ such that $\operatorname{dist}(\zeta_k, \Fix{\pi}) \leq q(k) \rho^k$ for all $k \in \mathbb{N}$. 

Next, we note that, for any $s \in \{1,\dots, N-1\}$
\begin{align*}
    \operatorname{dist}(\xi_{Nk + s}, \Fix{\pi}) &= \operatorname{dist}(\pi^s(F,\xi_{Nk}), \Fix{\pi})\\ &\leq p(s) \gamma^s \operatorname{dist}(\xi_{Nk}, \Fix{\pi})\\
    &\leq p(s) \gamma^s q(k) \rho^k\,,
\end{align*}
where we used the fact that $\operatorname{dist}(\xi_{Nk}, \Fix{\pi}) = \operatorname{dist}(\zeta_k, \Fix{\pi})$ for any $k \in \mathbb{N}$. Letting $t = Nk + s$, and using the fact that $p(\cdot), q(\cdot) \in \mathcal{P}_{m+1}$, we obtain
\begin{align*}
    \operatorname{dist}(\xi_t, \Fix{\pi}) &\leq p(N-1) \gamma q\left(\left \lfloor \frac{t-s}{N} \right \rfloor\right) \rho^{\left \lfloor \frac{t-s}{N} \right \rfloor}\\
    &\leq p(N-1) \gamma q\left(\frac{t}{N}\right) \rho^{\left \lfloor \frac{t-N+1}{N} \right \rfloor}\\
    &\leq \underbrace{\frac{p(N-1) \gamma}{\rho^{2-\frac{1}{N}}} q\left(\frac{t}{N}\right)}_{r(t) \in \mathcal{P}_{m+1}(t)} \left(\rho^{\frac{1}{N}}\right)^{t}\,.
\end{align*}
Since $\rho = p(N)\gamma^N$, we have that $\rho^{\frac{1}{N}}= \sqrt[N]{p(N)} \gamma$. This concludes the proof.

\section{Proof of Theorem~\ref{th:necessity}}
Let $v_t(F, \xi_{t:0}) = -\pi(F, \chi_t) + \sigma_t(F, \chi_{t:0})$. We first show by induction that $\xi_t = \chi_t$ at all times, starting from the base case $\xi_0 = \chi_0$, which holds by construction. Assume now that $\xi_{t:0} = \chi_{t:0}$. We aim to prove that $\xi_{t+1} = \chi_{t+1}$. This holds because
\begin{align*}
    \xi_{t+1} &= \pi(F, \xi_t) - \pi(F, \chi_t) + \sigma_t(F, \chi_{t:0})\\
    &= \sigma_t(F, \chi_{t:0}) = \chi_{t+1}\,.
\end{align*}
It remains to show that the sequence $v_t(F, \xi_{t:0}) = -\pi(F, \chi_t) + \sigma_t(\chi_{t:0})$ belongs to $\ell_{exp}(m, \gamma)$. To prove this, we rewrite $v_t$ as
\begin{align}
\label{eq:matching_v}
    v_t = - (\pi(F, \chi_t) - \chi_t) + \sigma_t(F, \chi_{t:0}) - \chi_t\,.
\end{align}
Since $\pi(F,\cdot)$ is Lipschitz continuous, letting $\chi_t^p$ be any element of $\argmin_{\chi \in \Fix{\pi}} |\chi - \chi_t|^2$, there exists a constant $L_\pi \in \mathbb{R}_+$ such that
\begin{align*}
    |\pi(F,\chi_t)-\chi_t|&=|\pi(F,\chi_t)- \chi_t^p+\chi_t^p-\chi_t|\\
    &=  |\pi(F,\chi_t)- \pi(F,\chi_t^p)+\chi_t^p-\chi_t|\\
    &\leq (L_\pi+1)|\chi_t- \chi_t^p|\\
    &= (L_\pi+1)\operatorname{dist}(\chi_t,\Fix{\pi})\,,
\end{align*}
and hence  $- (\pi(F, \bm{\chi}) - \bm{\chi}) \in \ell_{exp}(m,\gamma)$. We further have that $\bm{\sigma}(F,\bm{\chi})-\bm{\chi} \in \ell_{exp}(m,\gamma)$ by the regularity assumption on $\bm{\sigma}$ as per Definition~\ref{def:regularity}. Since the sum of signals in $\ell_{exp}(m, \gamma)$ belongs to $\ell_{exp}(m, \gamma)$, we conclude the proof by inspection of \eqref{eq:matching_v}.

\end{document}